\let\hat\widehat
\let\tilde\widetilde
\newtheorem{theorem}{Theorem}
\newtheorem{lemma}[theorem]{Lemma}
\newtheorem{example}[theorem]{Example}
\newtheorem{proposition}[theorem]{Proposition}
\DeclareMathOperator*{\argmin}{argmin}
\title{Predictive clustering}
\author{%
	Jaehyeok Shin, Alessandro Rinaldo and Larry Wasserman\\
	Department of Statistics and Data Science \\
	Carnegie Mellon University \\
	%Pittsburgh, PA 15213 \\
	\texttt{shinjaehyeok@cmu.edu, arinaldo@cmu.edu, larry@stat.cmu.edu}
}
\begin{document}
	\maketitle
		
		\begin{abstract}
			We show how to convert any clustering into a prediction set. This has the effect of converting the clustering into a (possibly overlapping) union of spheres or ellipsoids. The tuning parameters can be chosen to minimize the size of the prediction set. When applied to $k$-means clustering, this method solves several problems: the method tells us how to choose $k$, how to merge clusters and how to replace the Voronoi partition with more natural shapes. We show that the same reasoning can be applied to other clustering methods.
		\end{abstract}

	\section{Introduction}

	Although $k$-means clustering is very popular,
	it suffers from several drawbacks:
	it is difficult to choose $k$,
	the clusters form a Voronoi tesselation which may not be accurate representations
	of the clusters,
	we may need to merge clusters to get more flexible clusters,
	and the clusters might not approximate the upper level sets of the density.
	
	A simple alternative to $k$-means clustering is to
	use a union of spheres or ellipsoids.
	Specifically, we define the clusters
	to be the connected components
	$C_1,\ldots, C_r$ 
	of the set
	\begin{equation}
	{\cal C} = \bigcup_{j=1}^k B(c_j,r_j)
	\end{equation}
	where
	$B(c_j,r_j)$
	denotes a ball of radius $r_j$ centered at some point $c_j$.
	Note that the number of clusters $r$
	can be strictly less than $k$.
	%We then define the generalized Voronoi diagram
	%$V_1,\ldots, V_r$
	%where $V_j$ is the set of points closer to $C_j$ than any another $C_s$.
	
	Clusters of this form are very simple,
	easy to understand, and easy to represent computationally.
	Of course, we need a way to choose $k$, the centers and the radii.
	One idea would be to choose the spheres to minimize volume subject to
	containing a given fraction of the points.
	This is essentially the excess mass approach pioneered by
	\cite{polonik1995measuring}.
	However, performing this minimization is 
	a combinatorial optimization and is computationally very expensive.
	Also, the excess mass approach does not provide a way to choose $k$.
	
	In this paper, we take a different approach.
	We start with a given clustering method such as $k$-means clustering.
	We define certain residuals from the initial clustering which allows us
	to convert the resulting clusters into a $1-\alpha$ prediction region
	${\cal C}$
	using a method called
	{\em conformal prediction}
	\citep{shafer2005algorithmic}.
	The resulting conformalized clustering turns out to precisely be
	a union of spheres (or ellipsoids).
	Furthermore, we get, as a bonus, the 
	distribution free coverage guarantee, namely,
	\begin{equation}
	\inf_P P^{n+1}(Y\in {\cal C})\geq 1-\alpha
	\end{equation}
	where the infimum is over all probability distributions and
	$Y$ denotes a future observation.
	We then choose any tuning parameters
	(such as $k$ in $k$-means) by minimizing the volume of the prediction region.
	In contrast to using within sums of squares --- which is always minimized by taking $k$ as large as possible ---
	the volume of the prediction set is typically minimized at a value much smaller than the sample size.
	
	In summary, by taking a predictive approach,
	we end up replacing $k$-means clustering with
	$k$-spheres clustering
	and we have a simple way to choose tuning parameters.

	Figure \ref{fig::voronoi}
	shows an example.
	The left panel shows the data.
	The middle panel shows the Voronoi partition from 
	the usual $k$-means algorithm using $k=6$.
	(There are six centers and each element of the Voronoi partition is the set of points closest to each center.)
	The right panel shows the output of our method
	which consists of two clusters (a union of six spheres).
	Despite the fact that we are overfitting,
	the method automatically merges the six spheres
	into two clusters.
	The resulting union of spheres has the aforementioned predictive interpretation.
	A new observation will be contained in the set with probability at least $1-\alpha$
	where, in this example, we chose $\alpha=0.1$.

	We will show that the same general approach can be applied to
	other clustering methods such as mixture models and density-based clustering.
	In each case, by taking a predictive approach
	and minimizing the volume of the prediction set,
	we eliminate the problem of choosing tuning parameters
	and we convert the clusters into
	a simple form such as a union of spheres.

	{\em Related Work.}
	\cite{lei2015conformal}
	used predictive clustering to explore functional data.
	The current paper greatly extends the approach in that paper.
	Clustering based on ellipsoids is considered in
	\cite{kumar2008scale}.
	General excess mass clustering was pioneered
	in \cite{polonik1995measuring}.

	\begin{figure}
		\begin{center}
			\includegraphics[scale=.3]{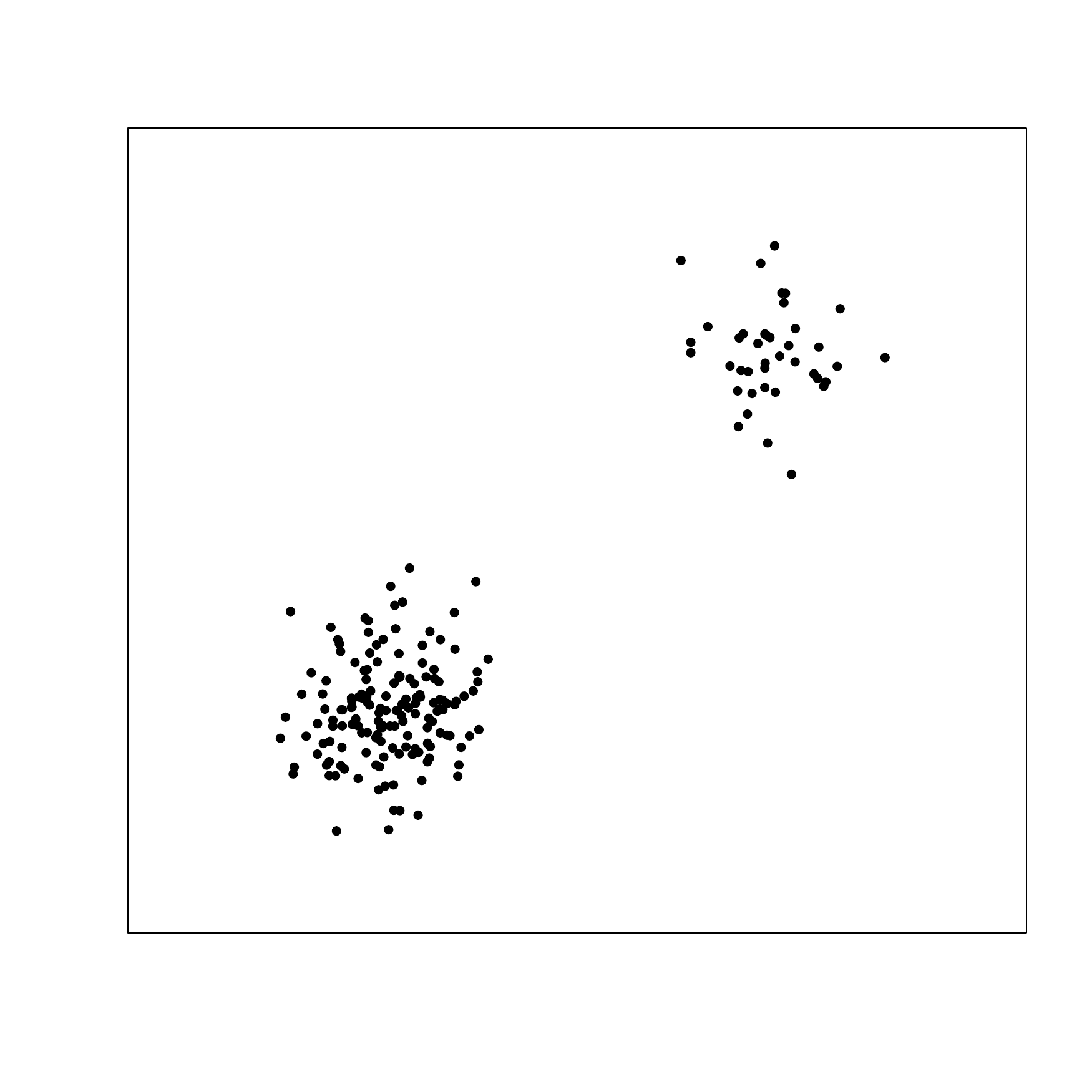}
			\includegraphics[scale=.3]{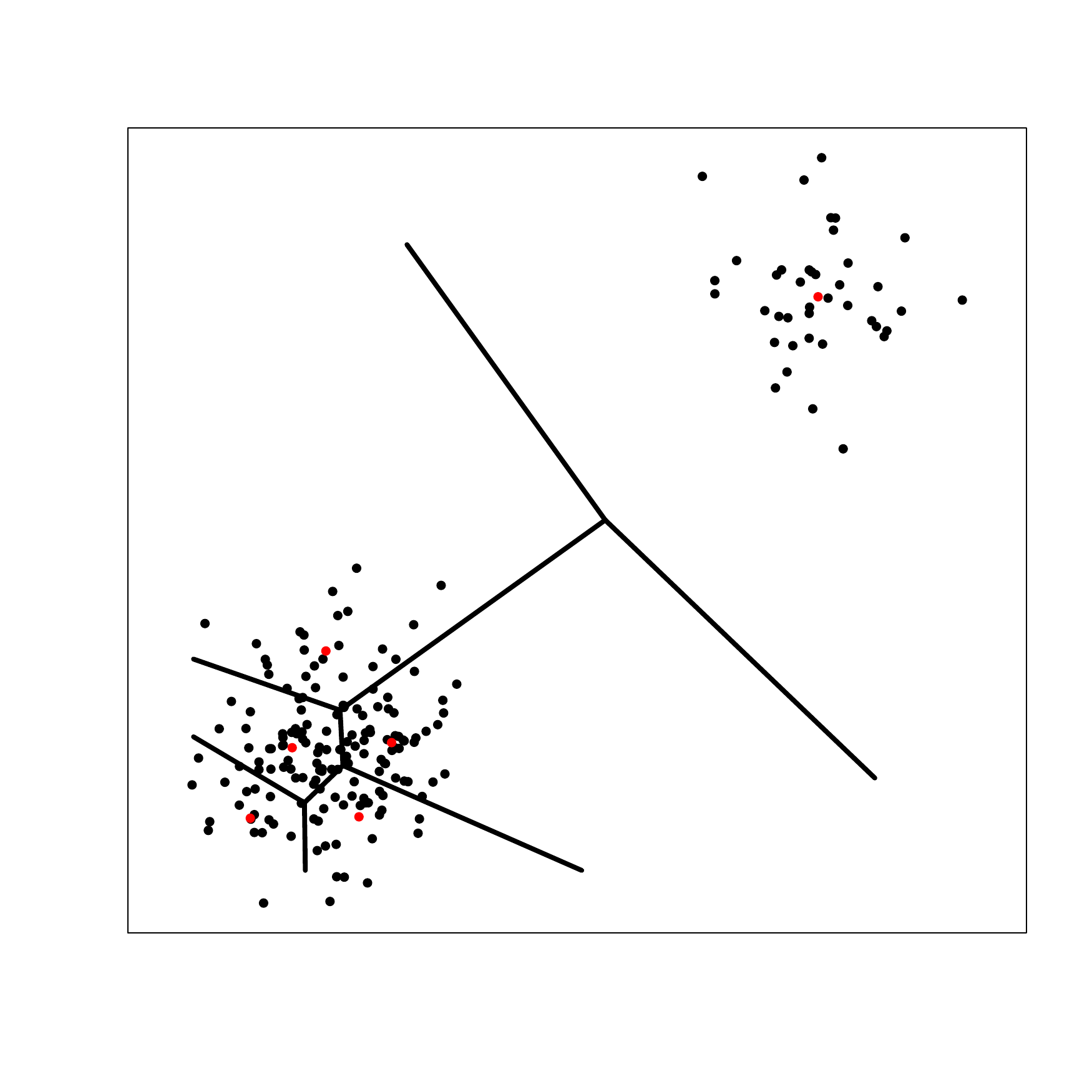}
			\includegraphics[scale=.3]{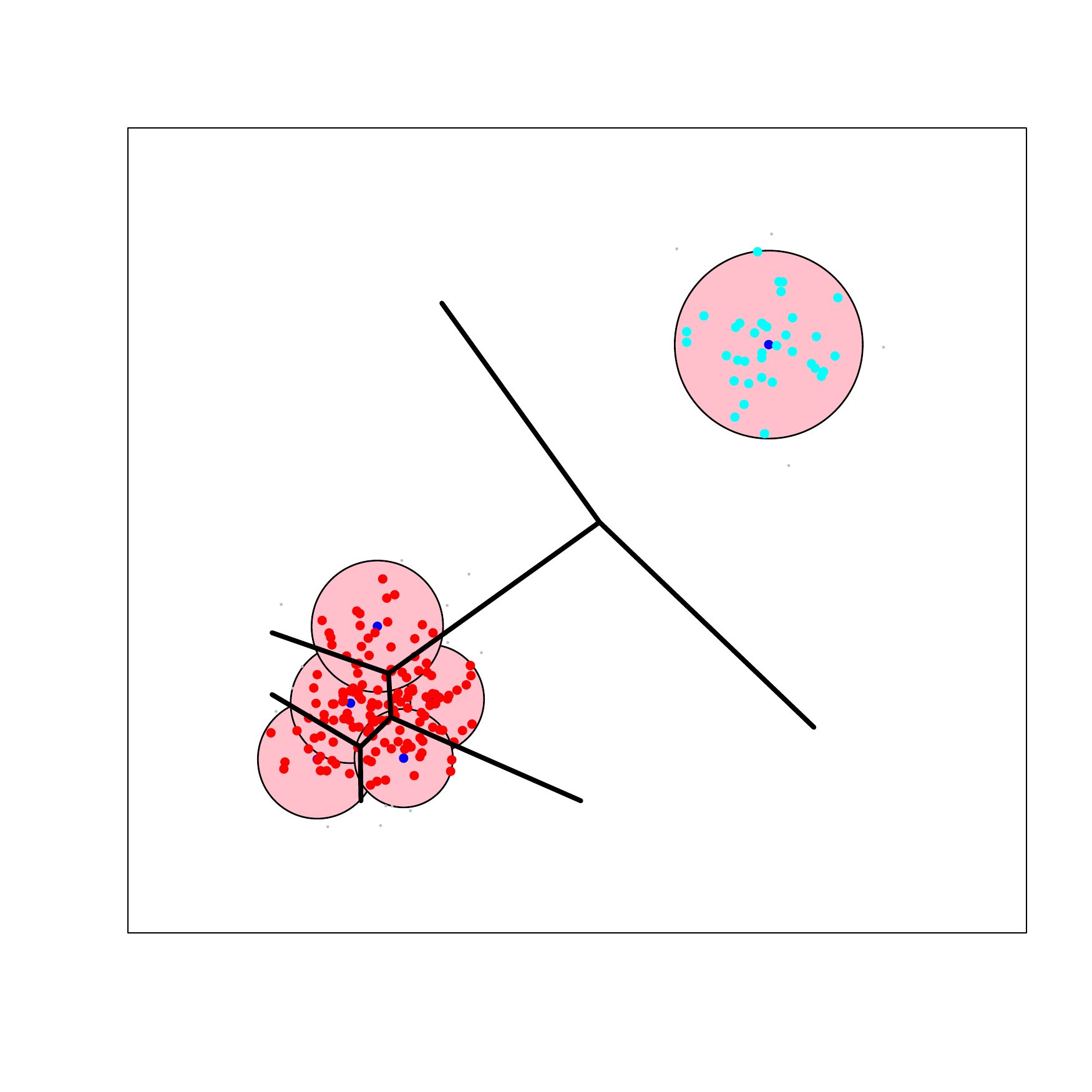}
		\end{center}
		\caption{\em Left: A two-dimensional dataset.
			Middle: Voronoi tesselation from $k$-means clustering with $k=6$.
			Right: Our method converts the 
			Voronoi tesselation into a union of (overlapping) six spheres.
			These spheres merge to form two clusters.}
		\label{fig::voronoi}
	\end{figure}

	{\em Paper Outline.}
	We review $k$-means clustering in Section \ref{section::kmeans}.
	We explain conformal prediction in Section \ref{section::conformal}.
	Our new approach, applied to $k$-means clustering,
	is described in Sections 
	\ref{section::improved} and
	\ref{section::more-improved}.
	Section \ref{section::mixtures}
	shows that we can use the same reasoning to convert
	mixture model clustering into a union of spheres.
	In Section \ref{section::max-mix}
	we introduce a new family of distributions called a 
	max-mixture.
	This allows us to relate unions of spheres with level sets of densities.
	We show that
	level sets of certain functions
	can also be converted into unions of spheres clustering
	in Section \ref{section::level-sets}.

	\section{Background}
	
	\subsection{$k$-means Clustering}
	\label{section::kmeans}
	
	In this section, we briefly review $k$-means clustering, and we discuss some of its drawbacks.
	Let $Y_1,\ldots, Y_n \sim P$ be
	iid draws from a distribution $P$
	where $Y_i\in \mathbb{R}^d$.
	The $k$-means cluster centers
	$c_1,\ldots, c_k\in \mathbb{R}^d$ are chosen to minimize
	\begin{equation}
	R(k)=\frac{1}{n}\sum_{i=1}^n \min_j ||Y_i - c_j||^2.
	\end{equation}
	The Voronoi cell $V_j$ is defined to be
	\begin{equation}
	V_j = \Bigl\{y:\ ||y-c_j|| \leq ||y - c_s|| \ {\rm for \ all\ }s\neq j\Bigr\}.
	\end{equation}
	Thus, $V_j$ consists of the points closest to $c_j$.
	The Voronoi cells $V_1,\ldots, V_k$ define the clusters.
	In particular, the $j^{\rm th}$ data cluster is
	$C_j = \{ Y_i:\ Y_i \in V_j\}$.
	
	Despite its simplicity,
	$k$-means has many drawbacks:
	
	\begin{enumerate}
		\item Despite years of research, there is no agreed upon standard method for choosing $k$.
		\item The Voronoi cells can be rather unnatural shapes that don't reflect what we mean by a cluster.
		\item If the clusters are roughly spherical and well separated, $k$-means works well.
		But in other cases, it can do poorly.
		One way to deal with this problem is to use a large value of $k$ and then
		merge some of the clusters. This gives a much more flexible clustering method.
		But it is not clear how to decide when to merge clusters.
		\item In some sense, the ideal clusters with prediction coverages are the upper level sets of the density $p$ since the level sets are the minimum volume predictive sets with the natural clusters induced by connected components of them. \citep{rinaldo2010generalized, lei2013distribution} 
		One could try to estimate these using density estimation but this can be difficult
		in multivariate problems.
		It would be useful if $k$-means clustering could be used as an approximation to level set clustering.
		But no such results exist.
	\end{enumerate}

	As we shall see, our predictive approach provides solutions, simultaneously,
	to all these problems.
	Before explaining the method,
	we need to discuss conformal prediction.

	\subsection{Conformal Prediction}
	\label{section::conformal}
	
	Conformal prediction 
	\citep{shafer2005algorithmic}
	is a general method for constructing distribution-free
	prediction sets.
	The basic idea is to test the hypothesis 
	$H_0: Y=y$
	that
	a future observation $Y$ takes the value $y$.
	The test is based on a set of residuals or {\em conformal score}.
	The test is performed for all values of $y$.
	By inverting the test, we get a prediction region.
	The precise steps are given in Algorithm~\ref{alg::conformal}.
	
	\begin{algorithm}
		\fbox{\parbox{\textwidth}{
				\begin{center}
					{\sf Conformal Algorithm}
				\end{center}
				\begin{enumerate}
					\item Fix $Y_{n+1} = y$ where $y$ is an arbitrary value.
					Define the {\em augmented dataset} 
					$$
					{\cal A}(y) = \{ Y_1,\ldots, Y_n,Y_{n+1}\}.
					$$
					
					\item 
					For $i=1,\ldots, n+1$ let
					$R_i(y) = \phi(Y_i, {\cal A})$
					where $\phi$ is any function that is
					invariant to permutations of the elements of ${\cal A}$.
					$R_i(y)$ is called the conformal residual.
					\item Let
					\begin{equation}\label{eq::pi}
					\pi(y) = \frac{1}{n+1}\sum_{i=1}^{n+1}I( R_i(y) \geq R_{n+1}(y)).
					\end{equation}
					\item Repeat the above steps for every $y$ and set
					$$
					C_n = \Bigl\{y:\ \pi(y) \geq \alpha\Bigr\} =
					\Biggl\{ y:\ 
					\frac{1}{n+1}\sum_{i=1}^{n+1}
					I\Bigl( R_i(y) \geq R_{n+1}(y)\Bigr) \geq \alpha \Biggr\}.
					$$
				\end{enumerate}
		}}
		\caption{\em The basic conformal algorithm.}
		\label{alg::conformal}
	\end{algorithm}

	The quantity $\pi(y)$ in (\ref{eq::pi}) can be thought of as a p-value
	for testing the hypothesis 
	$H_0: Y=y$.
	The residuals
	$R_1,\ldots, R_{n+1}$ are exchangeable under $H_0$ so that
	$\pi(y)$ is uniformly distributed on
	$\{1/(n+1),\ldots, 1\}$.
	The last step inverts the test to get a confidence set for $Y_{n+1}$.
	It follows that, for every distribution $P$,
	\begin{equation}
	P^{n+1}(Y\in C_n) \geq 1-\alpha
	\end{equation}
	where $Y$ is a new observation.
	For a proof of this fact see \cite{shafer2005algorithmic}.
	
	A simple example of a conformal residual is
	$R_i(y)=|Y_i - \overline{Y}_y|$
	where
	$\overline{Y}_y  = (Y_1 + \cdots + Y_n + y)/(n+1)$ is the mean of the augmented data.
	The resulting prediction set is then
	$$
	C_n = \Biggl\{ y:\ 
	\frac{1}{n+1}\sum_{i=1}^{n+1}
	I\Bigl( |Y_i - \overline{Y}(y)| \geq  |y-\overline{Y}(y)|\Bigr) \geq \alpha \Biggr\}.
	$$

	The choice of residual function $\phi$ does not affect the
	validity --- that is the coverage --- of $C_n$.
	But it does affect the size of $C_n$.
	In some sense, the optimal choice is $R_i = 1/p(Y_i)$;
	see \cite{lei2013distribution}.
	As we explain in the next section, clustering can be used to construct the residuals
	which turns the clustering method into a prediction method.
	A similar idea was used in \cite{lei2015conformal}  for the purpose of
	clustering functional data.
	
	{\em Split Conformal Prediction.}
	A modified algorithm called
	{\em split conformal prediction}
	is much faster than the standard conformal method.
	As the name implies, the method uses data splitting.
	Suppose the residuals have the form
	$R_i = \phi(Q,Y_i)$
	for some function $Q$ of the data.
	We use half the data to compute $Q$, and we compute the residuals on the other half of the data.
	We then get a prediction region by choosing the appropriate quantile of the residuals.
	The details are in Algorithm~\ref{alg::split}.

	\begin{algorithm}
		\fbox{\parbox{\textwidth}{
				\begin{center}
					{\sf Split Conformal Algorithm}
				\end{center}
				\begin{enumerate}
					\item Split the data into two halves
					${\cal Y}_1$ and 
					${\cal Y}_2$.
					\item Compute $Q$ from ${\cal Y}_1$.
					\item Compute the residuals
					$R_i = \phi(Q,Y_i)$ where
					$Y_i \in {\cal Y}_2$.
					\item Let $t_\alpha$ be the $1-\alpha$ quantile of the residuals.
					\item Let $C_n= \{ y:\ \phi(Q,y) \leq t_\alpha\}.$
				\end{enumerate}
		}}
		\caption{\em The split conformal algorithm.}
		\label{alg::split}
	\end{algorithm}
	
	The set $C_n$ again has the distribution-free property
	$$
	\inf_P P(Y_{n+1}\in C_n) \geq 1-\alpha.
	$$
	Split conformal inference is much faster than full conformal inference
	because it avoids the augmentation step.

	\section{From $k$-Mean Clustering to $k$-Spheres Clustering}
	\label{section::improved}
	
	We can combine prediction with $k$-means clustering 
	by defining residuals from the clustering and then applying
	conformal prediction.
	As we shall see,
	the resulting prediction set can be used to define modified clusters.
	We will use split conformal prediction.
	So the first step is to split randomly
	the data into two groups
	${\cal Y}_1$ and ${\cal Y}_2$.
	For simplicity, assume each group has a size of $n$.
	
	\subsection{The Basic Method}

	Let
	$c_1,\ldots, c_k$ denote the cluster centers
	after applying $k$-means clustering to ${\cal Y}_1$.
	Define the residual
	$$
	R_i = \min_j ||Y_i - c_{j}|| :=  ||Y_i - c_{j(i)}||
	$$
	for each $Y_i \in {\cal Y}_2$.
	Here, $c_{j(i)}$ is the cluster center closest to $Y_i$.
	Now apply the conformal algorithm from the last section
	to get a conformal prediction set $C_k$.
	The precise steps are given in Algorithm~\ref{alg::k-spheresI}.
	
	\begin{algorithm}
		\fbox{\parbox{\textwidth}{
				\begin{center}
					{\sf Converting $k$-means to $k$-Spheres}
				\end{center}
				\begin{enumerate}
					\item Split the data into two halves
					${\cal Y}_1$ and 
					${\cal Y}_2$.
					\item Run $k$-means on ${\cal Y}_1$ to get centers
					$c_1,\ldots, c_k$.
					\item For the data in ${\cal Y}_2$ compute the (non-augmented) residuals
					$$
					R_i = ||Y_i - c_{j(i)}||
					$$
					where $c_{j(i)}$ is the closest center to $Y_i$.
					\item Let $t_\alpha$ be the $1-\alpha$ quantile of the residuals.
					\item Let ${\cal C}_k= \{ y:\ \min_j ||y - c_j|| \leq t_\alpha\}$.
					\item Choose $\hat k$ to minimize $\mu({\cal C}_k)$ where $\mu$ is Lebesgue measure.
					\item Return: ${\cal C}_{\hat k}$.
				\end{enumerate}
		}}
		\caption{\em This algorithm converts $k$-means clustering into $k$-spheres clustering.}
		\label{alg::k-spheresI}
	\end{algorithm}
	
	From the definition of the residuals,
	together
	with the definition of ${\cal C}_k$ we immediately see
	that ${\cal C}_k$ is a union of spheres.
	We this have:
	
	\begin{lemma}
		The conformal set has the form
		\begin{equation}
		{\cal C}_k = \bigcup_{j=1}^k B(c_j, t_\alpha)
		\end{equation}
		where $t_\alpha$ is defined in Step 4 of Algorithm~\ref{alg::k-spheresI}.
		If $Y$ denotes a new observation, then
		\begin{equation}
		\inf_P P^{n+1}(Y \in {\cal C}_k)\geq 1-\alpha
		\end{equation}
		where
		the infimum is over all distributions $P$.
	\end{lemma}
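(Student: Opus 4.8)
The plan is to establish the two assertions separately: first the geometric description of $\mathcal{C}_k$ as a union of balls, and then the distribution-free coverage bound, which I would obtain by recognizing the procedure as a special case of split conformal prediction.

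For the shape of the set, I would argue by a direct set-theoretic identity and nothing more. By Step 5 of Algorithm~\ref{alg::k-spheresI}, we have $\mathcal{C}_k = \{y : \min_j \|y - c_j\| \leq t_\alpha\}$. A point $y$ satisfies $\min_j \|y - c_j\| \leq t_\alpha$ precisely when $\|y - c_j\| \leq t_\alpha$ holds for at least one index $j$, i.e.\ when $y \in B(c_j, t_\alpha)$ for some $j$. Unwinding the minimum in this way gives $\mathcal{C}_k = \bigcup_{j=1}^k B(c_j, t_\alpha)$, which is the claimed form.

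For the coverage guarantee, I would observe that Algorithm~\ref{alg::k-spheresI} is an instance of the split conformal algorithm (Algorithm~\ref{alg::split}) under the identifications $Q = (c_1,\ldots,c_k)$, computed from $\mathcal{Y}_1$, and score function $\phi(Q,y) = \min_j \|y - c_j\|$. With these choices the two algorithms agree step by step: the residuals $R_i = \phi(Q,Y_i)$ for $Y_i \in \mathcal{Y}_2$ coincide with those in Step 3, the threshold $t_\alpha$ is the same $1-\alpha$ empirical quantile, and the event $\{Y \in \mathcal{C}_k\}$ is exactly $\{\phi(Q,Y) \leq t_\alpha\}$. Hence the distribution-free bound $\inf_P P(Y_{n+1} \in C_n) \geq 1-\alpha$ already stated for the split conformal algorithm transfers verbatim, yielding $\inf_P P^{n+1}(Y \in \mathcal{C}_k) \geq 1-\alpha$.

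I do not expect a genuine obstacle, since both claims are essentially definitional once the correspondence with split conformal prediction is made explicit. The only point warranting care is that the coverage argument rests on the exchangeability of $R_1,\ldots,R_n,R_{n+1}$ conditional on $\mathcal{Y}_1$, together with a choice of the quantile $t_\alpha$ that controls the rank of the new residual $R_{n+1}$; but this is exactly the content of the split conformal guarantee invoked above, so it need not be reestablished here.
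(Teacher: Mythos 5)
Your proposal is correct and follows the same route as the paper, which treats the lemma as immediate: the ball decomposition is read off from Step 5 of Algorithm~\ref{alg::k-spheresI} by unwinding the minimum, and the coverage bound is inherited directly from the stated split conformal guarantee with $Q=(c_1,\ldots,c_k)$ and $\phi(Q,y)=\min_j\|y-c_j\|$. You simply make explicit what the paper asserts in one line, so there is nothing to add.
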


	Every choice of $k$ gives a valid prediction set ${\cal C}_k$
	with the correct coverage.
	We choose
	$k$ to minimize the Lebesgue measure of ${\cal C}_k$.
	Unlike the within sums of squares --- which strictly decreases with $k$ ---
	the Lebesgue measure typically decreases then, eventually increases as $k$ increases.
	Thus we define
	\begin{equation}
	\hat k = \argmin_k \mu({\cal C}_k).
	\end{equation}
	The final clustering is ${\cal C}_{\hat k}$.
	We have now replaced the Voronoi tesselation with
	more intuitively shaped sets.
	Also, some of the spheres may be connected (overlapping).
	In other words, we have automatically merged the clusters.

	{\em Remark 1:
		To check whether spheres are connected to each other, we examine whether there exists at least a single point in intersections of spheres. This sample-based connectivity checking rule might possibly fail to find some weak connections among spheres.  In practice, however, we can obtain stable clusters by using the sample-based rule to disconnect weakly connected spheres.
	}
	
	{\em Remark 2:
		To preserve the prediction guarantee,
		we can restrict $k$ to be in some large range
		$1\leq k\leq K_n$
		and replace $\alpha$ with $\alpha/K_n$.
		Here, $K_n$ can increase with $n$.
		The union bound then ensures that the coverage guarantee is still valid
		even after selecting $k$.
		However, as we view clustering mainly as an exploratory tool,
		it is not crucial to do this correction.}
	
	The Lebesgue measure $\mu(C_k)$ can be easily estimated
	using importance sampling.
	For example, we can draw
	$Z_1,\ldots, Z_N$ from any convenient density $g$
	(such as a uniform density over a large region)
	and then we have that
	$$
	\mu(C_k) = \int_{C_k}d\mu = \int_{C_k} \frac{g(u)}{g(u)}d\mu(u) \approx 
	\frac{1}{N}\sum_{j=1}^N \frac{I(Z_j\in C_k)}{g(Z_j)}.
	$$

	The residual
	$R_i= ||Y_i - c_{j(i)}||$
	is simple and intuitive.
	But, due to the generality of conformal prediction,
	we have the liberty of choosing any permutation invariant residual.
	In fact, much better clusters can be obtained by defining
	\begin{equation}\label{eq::resid}
	R_i = \min_{j \in [k]} \left[\frac{ ||Y_i - c_{j}||^2}{\hat\sigma_{j}^2} + 
	2 d \log \hat\sigma_{j} - 2 \log \hat\pi_{j}\right]
	\end{equation}
	where
	$\hat \pi_n = n_j/n$,
	$n_j$ is the number of points from ${\cal Y}_1$ in $V_j$,
	and
	$$
	\hat\sigma_j^2 = n_j^{-1}\sum_{Y_i\in C_j} ||Y_i - \overline{Y}_j||^2.
	$$
	Let $t_\alpha$ be the $1-\alpha$ quantile of the residuals.
	Now we use Algorithm~\ref{alg::k-spheresI}
	but with the residuals defined as in
	(\ref{eq::resid}).
	Denote the resulting conformal prediction region by
	${\cal M}_k$.
	From the definition of $M_k$ together with the basic property of
	the conformal prediction we have:
	
	\begin{lemma}
		We have that
		$$
		{\cal M}_k = \bigcup_{j=1}^k B(c_j,r_j)
		$$
		where
		$$
		r_j = \hat\sigma_j \sqrt{[t_\alpha + 2\log \hat\pi_j - 2d\log \hat\sigma_j]_+}.
		$$
		Furthermore,
		$\inf_P P^{n+1}(Y \in {\cal M}_k)\ge 1-\alpha$.
	\end{lemma}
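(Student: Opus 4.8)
The statement has two parts: the explicit ball form of $\mathcal{M}_k$, and the distribution-free coverage. I would dispatch the coverage first, since it is immediate. The residual in (\ref{eq::resid}) has exactly the split form $R_i = \phi(Q, Y_i)$, with $Q = (c_1,\dots,c_k,\hat\sigma_1,\dots,\hat\sigma_k,\hat\pi_1,\dots,\hat\pi_k)$ a function of $\mathcal{Y}_1$ alone, and $t_\alpha$ the $1-\alpha$ quantile of $\{R_i : Y_i \in \mathcal{Y}_2\}$. Thus $\mathcal{M}_k = \{y : \phi(Q,y) \le t_\alpha\}$ is precisely the output $C_n$ of the Split Conformal Algorithm, and the coverage $\inf_P P^{n+1}(Y \in \mathcal{M}_k) \ge 1-\alpha$ is the distribution-free guarantee recorded immediately after Algorithm~\ref{alg::split}. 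Nothing further is needed for this half.

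For the shape, the plan is to rewrite the defining sublevel set of the min as a union of sublevel sets of the individual terms, then invert each one algebraically. Write $g_j(y) = ||y-c_j||^2/\hat\sigma_j^2 + 2d\log\hat\sigma_j - 2\log\hat\pi_j$, so that $R(y) = \min_j g_j(y)$. The elementary fact that $\min_j g_j(y) \le t_\alpha$ holds if and only if $g_j(y) \le t_\alpha$ for at least one $j$ gives
\begin{equation}
\mathcal{M}_k = \bigcup_{j=1}^k \{y : g_j(y) \le t_\alpha\}.
\end{equation}
Within each $j$, I would solve $g_j(y) \le t_\alpha$ for $||y-c_j||$: moving the constant terms across and multiplying by $\hat\sigma_j^2 > 0$ yields
\begin{equation}
||y - c_j||^2 \le \hat\sigma_j^2\bigl(t_\alpha + 2\log\hat\pi_j - 2d\log\hat\sigma_j\bigr),
\end{equation}
which is exactly $B(c_j, r_j)$ with $r_j = \hat\sigma_j\sqrt{[\,t_\alpha + 2\log\hat\pi_j - 2d\log\hat\sigma_j\,]_+}$. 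Taking the union over $j$ gives the claimed form.

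The one point requiring care --- and the main, if minor, obstacle --- is the positive part. When the bracketed quantity is negative, the inequality $||y-c_j||^2 \le (\text{negative})$ has no solution, so the $j$-th sublevel set is genuinely empty, whereas writing $r_j$ without the positive part would leave the radius undefined. Inserting $[\cdot]_+$ both keeps $r_j$ real and collapses the empty contribution (or, under the convention $B(c_j,0)=\{c_j\}$, a measure-zero one), so the union identity holds up to a Lebesgue-null set. I would simply record this convention and observe that it alters neither $\mu(\mathcal{M}_k)$ nor the coverage probability for continuous $P$, so both assertions of the lemma stand. Overall the argument is essentially bookkeeping: the coverage is inherited verbatim from split conformal prediction, and the geometry reduces to inverting a quadratic in $||y - c_j||$.
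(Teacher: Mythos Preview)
Your proposal is correct and matches the paper's own treatment: the paper states the lemma follows ``from the definition of $M_k$ together with the basic property of the conformal prediction,'' without further detail, and your argument (union-of-sublevel-sets plus quadratic inversion for the shape, split-conformal guarantee for coverage) is exactly the unpacking of that sentence. Your remark on the $[\cdot]_+$ convention is a reasonable extra care the paper does not spell out.
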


	As before we choose $k$ to minimize
	$\mu({\cal M}_k)$.

	\begin{example}
		Figure \ref{fig::kmeans} illustrate how the algorithm for converting
		$k$-means clustering into $k$-spheres clustering works. The left
		panel shows the data generated from four Normal distributions with
		background noise. The middle panel shows how the volume of clusters
		changes as $k$ varying from $1$ to $20$. The right panel shows
		clusters with the minimum volume ($k = 4$). The resulting clusters are
		$1-\alpha$ predictive region with $\alpha = 0.1$.
	\end{example}
	
	\begin{figure}
		\begin{center}
			\includegraphics[scale=.33]{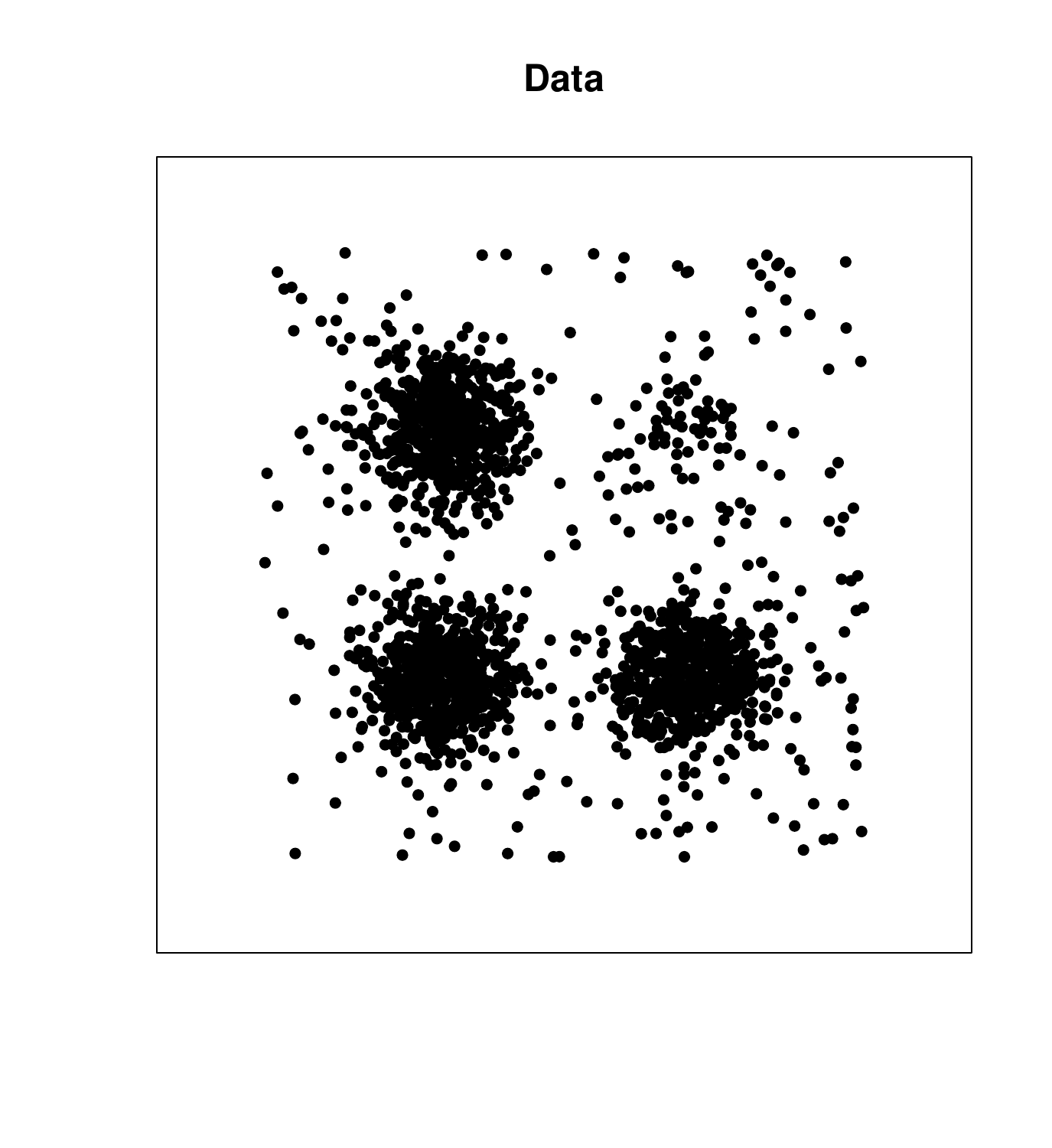}
			\includegraphics[scale=.33]{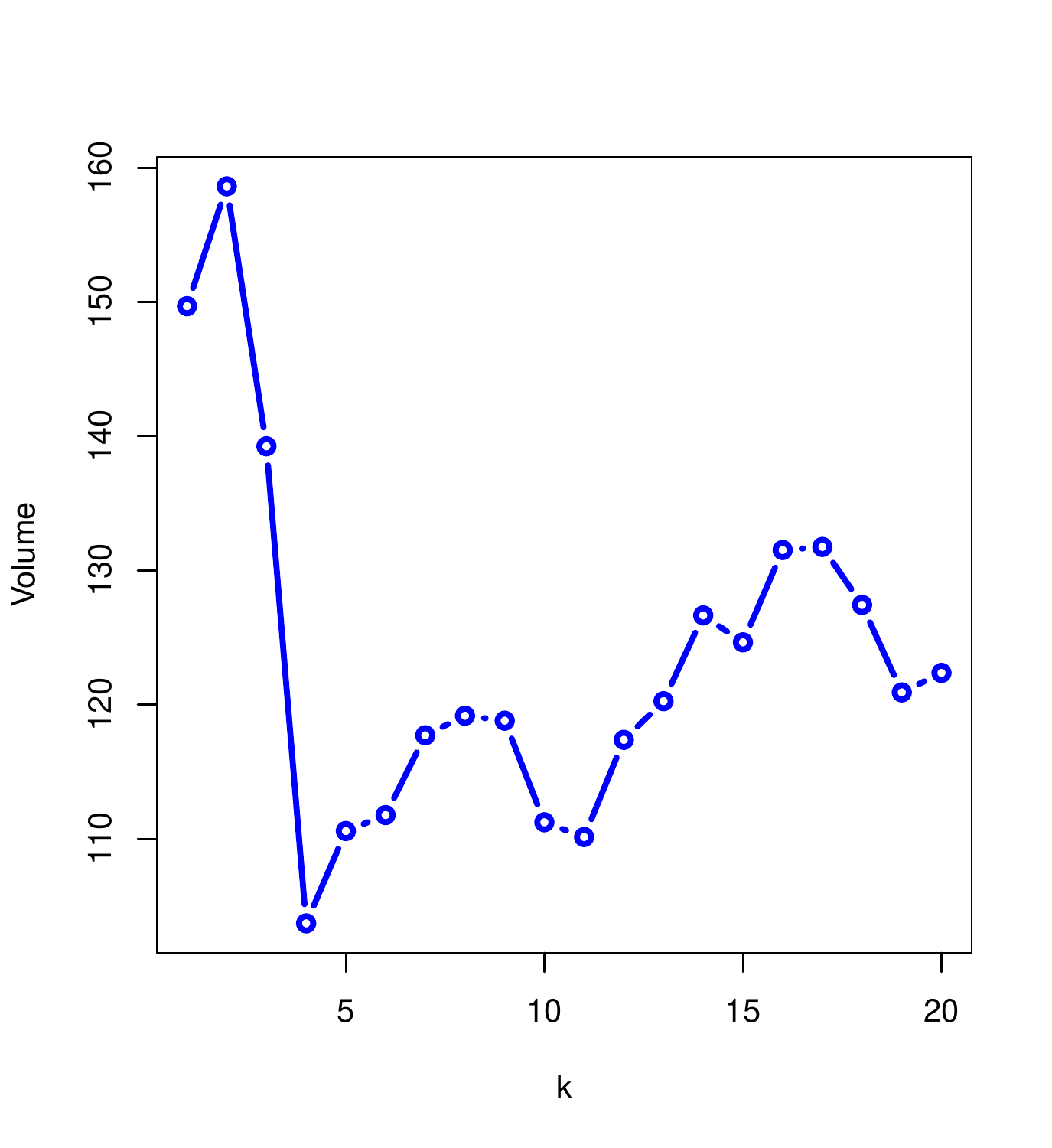}
			\includegraphics[scale=.33]{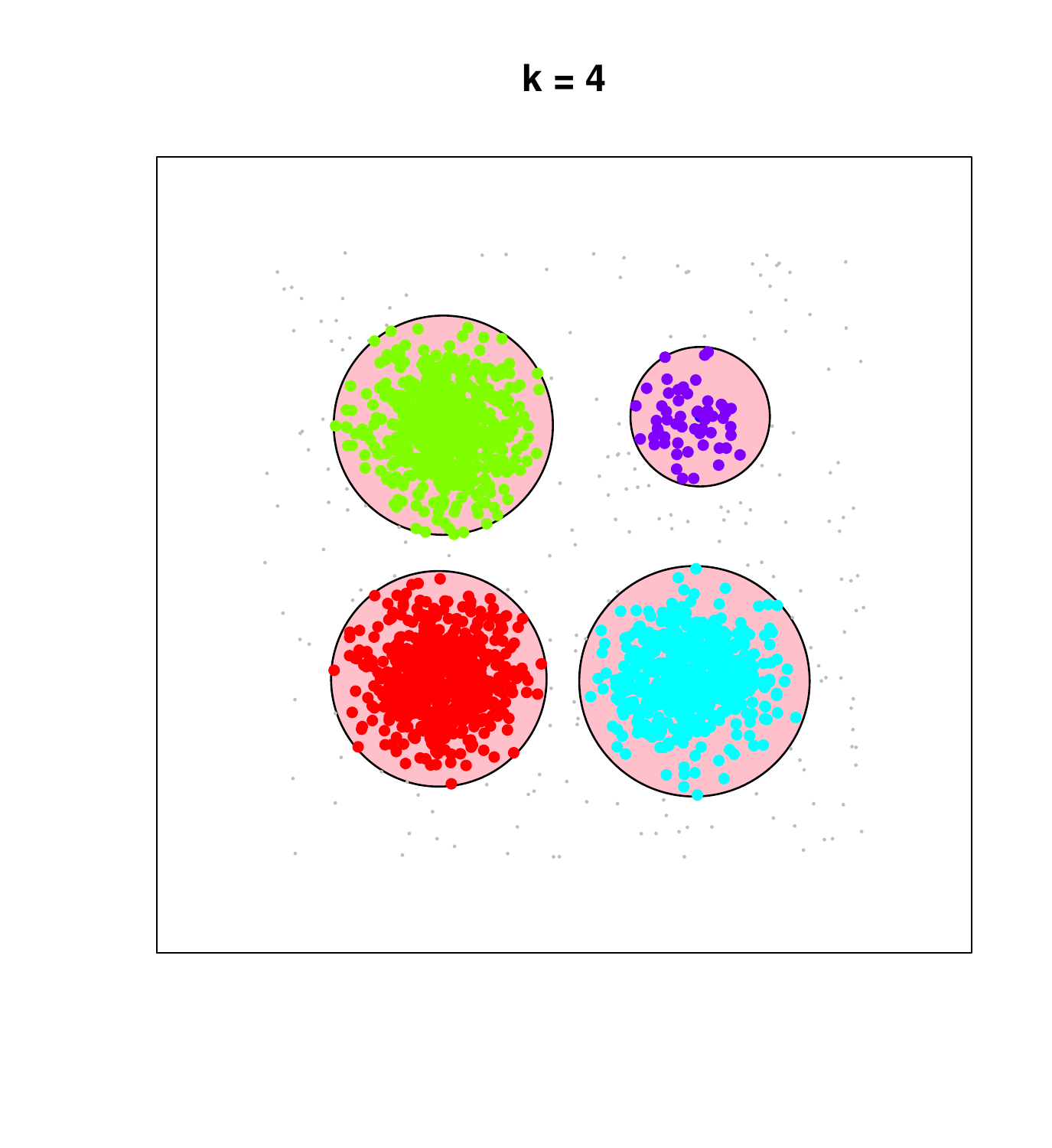}
		\end{center}
		\caption{\em Left: A two-dimensional dataset.
			Middle: $k$ versus volume of clusters.
			Right: Clusters with the minimum volume.}
		\label{fig::kmeans}
	\end{figure}

	Note that if we chose $k$ to minimize the usual within sums-of-squares,
	then we always end up choosing $k=n$.
	This is why choosing $k$ is so hard in $k$-means clustering.
	Fortunately, this does not happen in our approach.
	In fact, if the data points are well-separated into $k^* < n$ spherical clusters, 
	then $\mu(M_k)$ is minimized at $k = k^* < n$.
	We see this in Figure \ref{fig::kmeans_examples}, which shows what happens when we let $k$ grow.
	
	\begin{figure}
	    \begin{center}
	        \includegraphics[scale=.27]{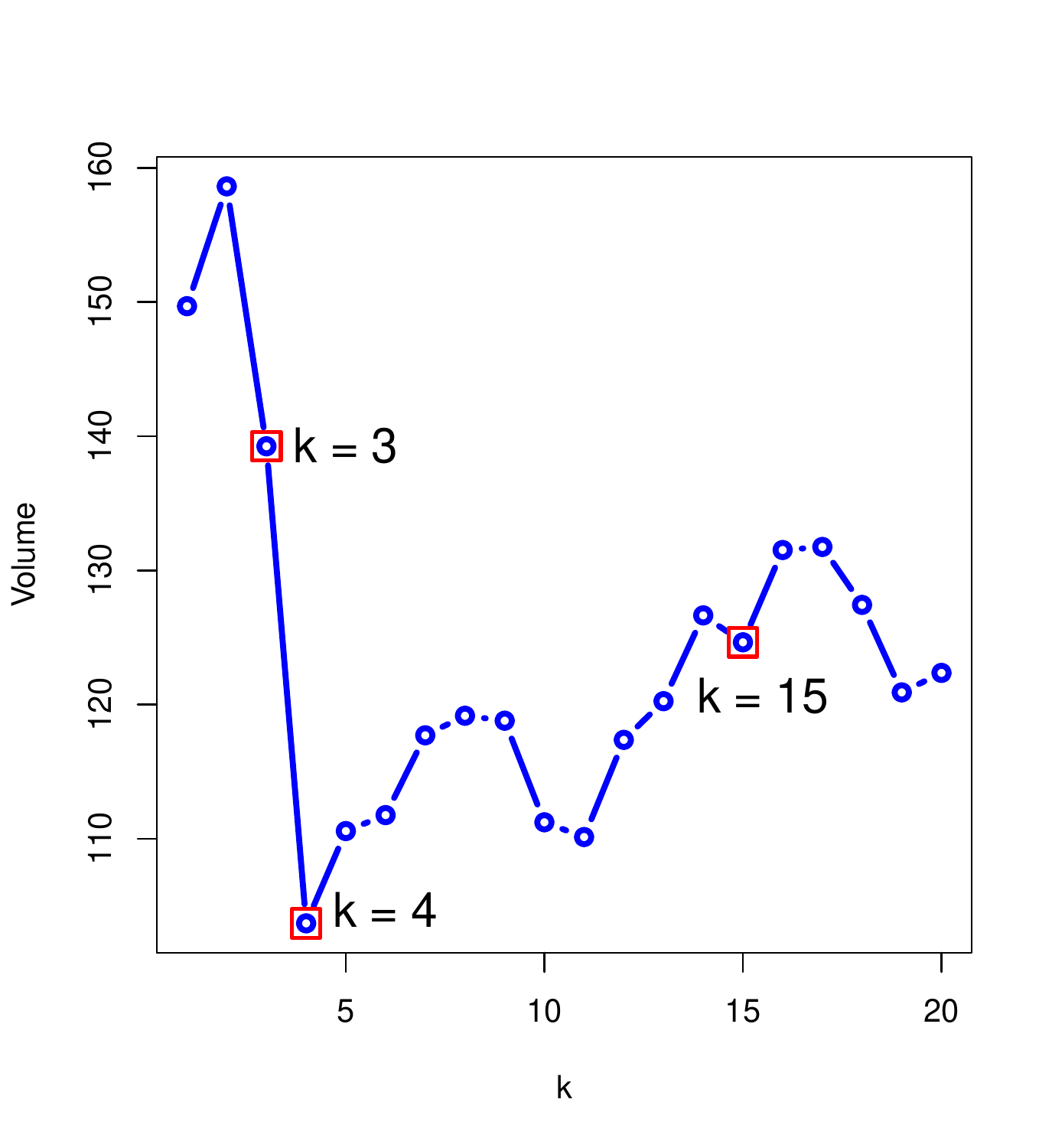}
	        \includegraphics[scale=.27]{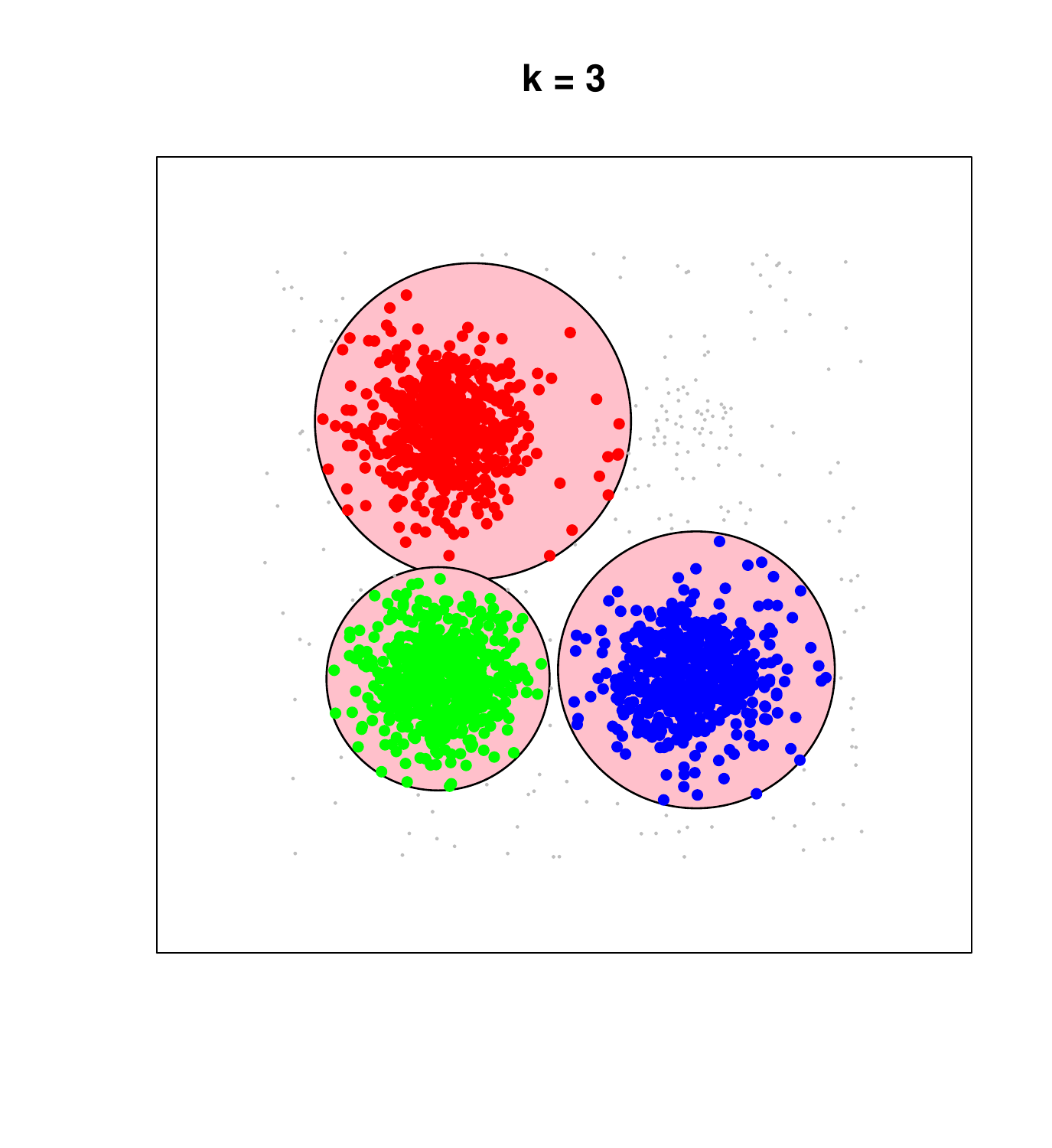}
	        \includegraphics[scale=.27]{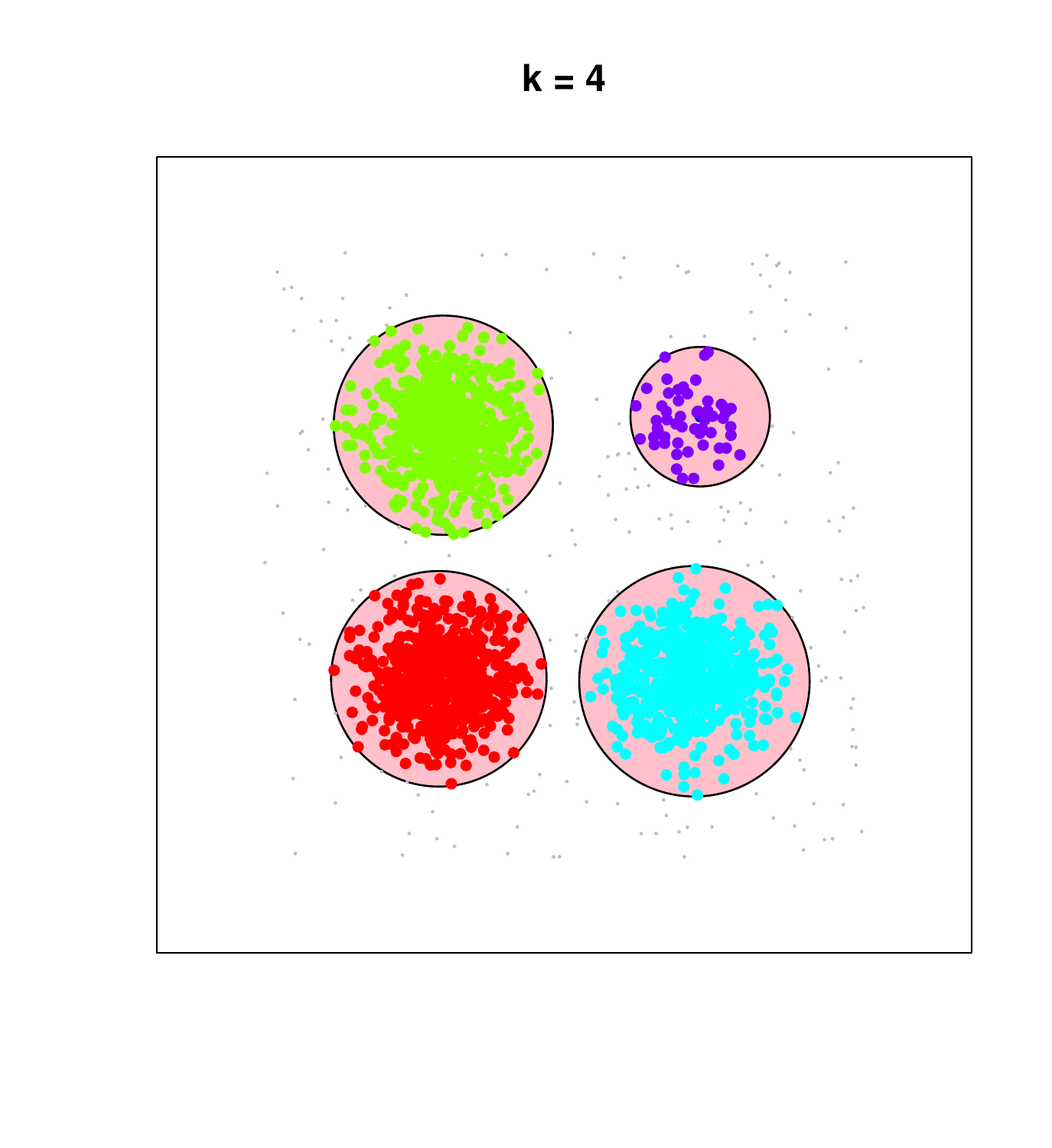}
	        \includegraphics[scale=.27]{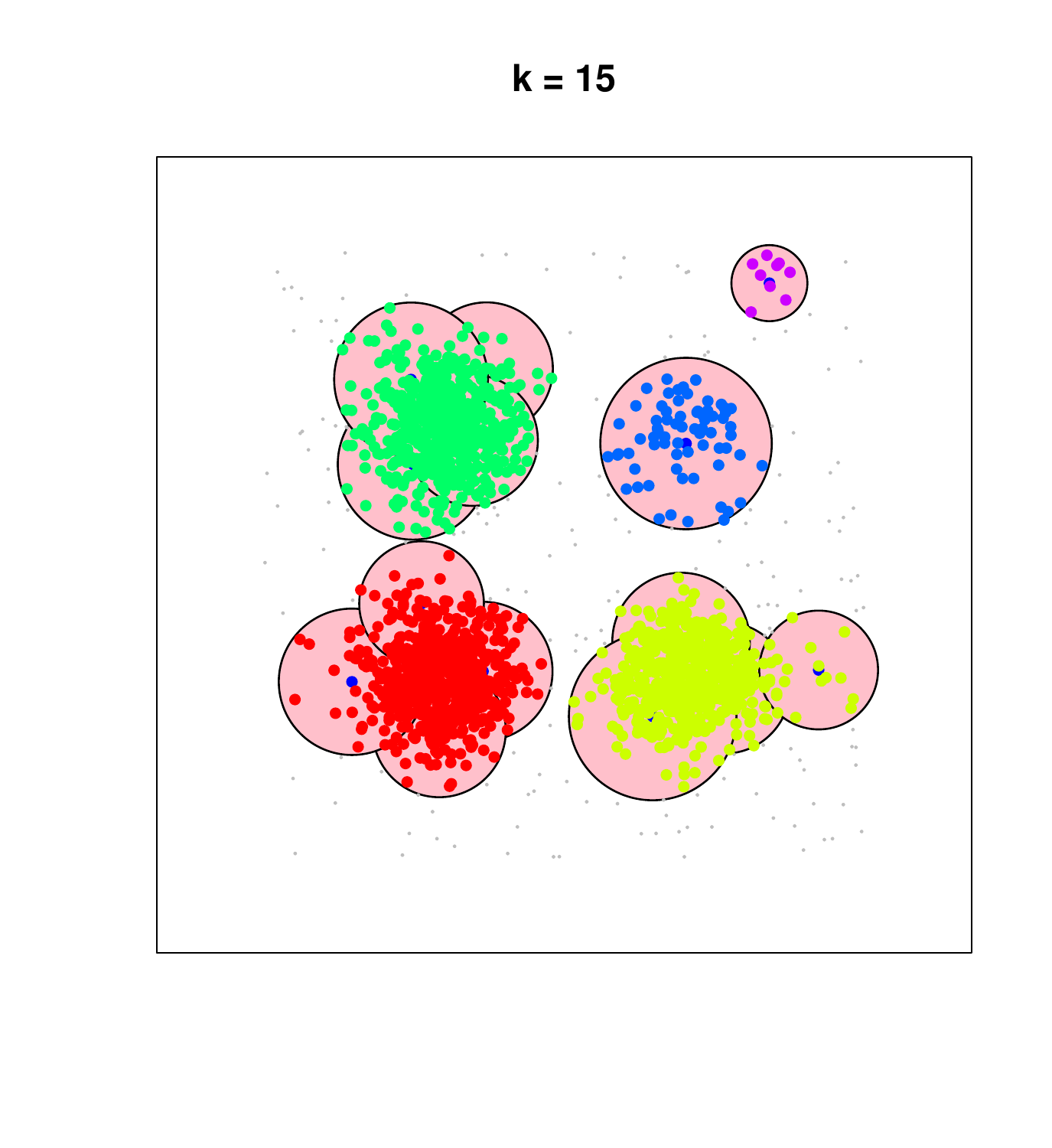}
	    \end{center}
	    \caption{\em From left to right : Volumes of clusters and clusters with $k = 3, 4, 15$.}
	    \label{fig::kmeans_examples}
	\end{figure}
%	
%	\begin{figure}
%		\centering
%		\begin{subfigure}[b]{0.45\textwidth}
%			\centering
%			\includegraphics[width=\textwidth]{Figures/Kmeans_vol_curve_with_dots}
%		\end{subfigure}
%		\hfill
%		\begin{subfigure}[b]{0.45\textwidth}  
%			\centering 
%			\includegraphics[width=\textwidth]{Figures/Kmeans_k_3}
%		\end{subfigure}
%		\vskip\baselineskip
%		\begin{subfigure}[b]{0.45\textwidth}   
%			\centering 
%			\includegraphics[width=\textwidth]{Figures/Kmeans_k_4}
%		\end{subfigure}
%		\quad
%		\begin{subfigure}[b]{0.45\textwidth}   
%			\centering 
%			\includegraphics[width=\textwidth]{Figures/Kmeans_k_15}
%		\end{subfigure}
%		\caption{\em (a) $ k$ versus volume of clusters.
%			(b-d) clusters with $k = 3, 4, 15$.}
%		\label{fig::kmeans_examples}
%	\end{figure}
%	

	\subsection{Improving the Choice of $k$ With Hypothesis Testing}
	\label{section::more-improved}
	
	So far we have chosen $\hat k$ to minimize $S_k \equiv \mu(C_k)$.
	Now, $S_k$ is a random variable
	and we can end up choosing a large $k$ due to random fluctuations.
	The resulting clustering, even in such cases, can be quite reasonable.
	Still, it is sometimes useful to use simpler clusterings with small $k$ when possible.
	To this end,
	we can use a hypothesis testing approach to choose $k$.

	%%\textcolor{red}{
	%%Let $\nu_k = \mathbb{E}[S_k]$ be the mean of $S_k$.
	%%For $k=1,2,\ldots, $ we will test the hypothesis
	%%$$
	%%H_0: \nu_k \leq \nu_t\ \ {\rm for\ all\  t>k}\ \ \ \ \ {\rm versus}\ \ \ \ \ 
	%%H_1: \nu_t < \nu_k\ \ {\rm for\ some\  t>k}.
	%%$$
	%%We choose $\hat k$ to be the first $k$ for which this hypothesis is not rejected.
	%%    We will construct a confidence interval $A_k$
	%%for the parameter
	%%\begin{equation}
	%%\theta_k = \min_{t>k}(\nu_t - \nu_k).
	%%\end{equation}
	%%If every value in $A_k$ is positive, we reject $H_0$.
	%%    To construct $A_k$
	%%we use the bootstrap.
	%%Let
	%%$\nu = (\nu_1,\ldots, \nu_K)$
	%%where $K$ is the largest value of $k$ we are considering.
	%%Let
	%%$S = (S_1,\ldots, S_K)$. We approximate the cdf
	%%$$
	%%F_n(t) = P( \sqrt{n}||S-\nu||_\infty \leq t)
	%%$$
	%%with its bootstrap approximation
	%%$$
	%%\hat F_n(t) = \frac{1}{B}\sum_{j=1}^B I( \sqrt{n}||S^{(j)}-S||_\infty \leq t)
	%%$$
	%%where
	%%$S^{(1)},\ldots, S^{(B)}$ denote bootstrap replications.
	%%Let $z_\alpha = \hat F_n^{-1}(1-\alpha)$.
	%%The set
	%%$W = \{ \nu:\ ||\nu-S||_\infty \leq z_\alpha/\sqrt{n}\}$
	%%is an asymptotic $1-\alpha$ confidence set for $\nu$.
	%%And
	%%$A_k = \{ \min_{s>k}(\nu_k - \nu_s):\ \nu_k,\nu_s \in W\}$
	%%is an asymptotic $1-\alpha$ confidence set for 
	%%$\min_{s>k} (\nu_k - \nu_s)$.
	%%We reject $H_0$ if
	%%$\min\{ a\in A_k\} >0$.
	%%}
	%%

	Let $\nu_k = \mathbb{E}[S_k]$ be the mean of $S_k$.
	For $k=K, K-1,\ldots, 1$ we will test the hypothesis
	$$
	H_0^k: \nu_t < \nu_k\ \ {\rm for\ some\  t < k}.\ \ \ \ \ {\rm versus}\ \ \ \ \ 
	H_1^k: \nu_k \leq \nu_t\ \ {\rm for\ all\  t < k}.
	$$
	We choose $\hat k$ to be the first $k$ for which this hypothesis is rejected.
	We will construct a confidence interval $A_{t,k}$
	for the parameter
	\begin{equation}
	\theta_{t, k} = \nu_t - \nu_k.
	\end{equation}
	For each $k$, if every value in $\bigcup_{t <k} A_{t,k}$ is positive, we reject $H_0^k$.
	To construct $A_{t,k}$
	we use the bootstrap. For each pair of $t < k$, we approximate the cdf
	$$
	F_n^{k} (t) = P( \sqrt{n}|S_t - S_k-\theta_{t, k}| \leq t)
	$$
	with its bootstrap approximation
	$$
	\hat F_n^{k}(t) = \frac{1}{B}\sum_{j=1}^B I( \sqrt{n}|S_t^{(j)}-S_k^{(j)} -(S_t - S_k)| \leq t)
	$$
	where
	$S^{(1)},\ldots, S^{(B)}$ denote bootstrap replications.
	Let
	\begin{align*}
	&z_{1-{\alpha/2}}= \hat F_n^{-1}(1-\alpha/2) \\
	&z_{\alpha/2} = \hat F_n^{-1}(\alpha/2) 
	\end{align*}
	and let
	\[
	A_{t,k} = \left\{\theta : 
	2(S_t - S_k) - z_{1-\alpha/2/(k-1)} < \theta < 2(S_t - S_k) - z_{\alpha/2/(k-1)}\right\},~~\forall t = 1, \dots, k-1
	\]
	Then, $A_{1,k}, \dots, A_{k-1, k}$ are asymptotic $1-\alpha$ confidence sets for $\theta_{1, k}, \dots, \theta_{t, k}$.
	We reject $H_0$ if
	$\min\{ a\in \bigcup_{t <k} A_{t,k}\} >0$.

	\section{Mixture Models}
	\label{section::mixtures}
	
	The same approach can be used with mixture models.
	Let
	$$
	p_k(y) =\sum_{j=1}^k \pi_j \phi(y;\mu_j,\Sigma_j)
	$$
	denote a mixture of Gaussians
	where
	$\phi(x;\mu_j,\Sigma_j)$ denotes a Normal
	density with mean vector $\mu_j$ and covariance matrix $\Sigma_j$.
	In this case, we define the residual to be
	\begin{equation} \label{eq::resid_GMM}
	R_i = \min_{j \in [k]} \left[\frac{1}{\hat \pi_{j} \phi(Y_i;\hat\mu_{j},\hat\Sigma_{j})}\right]
	\end{equation}
	or
	\begin{equation} \label{eq::resid_GMM2}
	R_i =     \min_j \left[ \frac{1}{2} (Y_i-\mu_j)^T \Sigma_j^{-1} (Y_i -\mu_j) + 
	\frac{1}{2}\log \det(\Sigma_j) - \log \pi_j \right].
	\end{equation}
	
	We note that a similar residual was used in
	\cite{lei2015conformal}
	in the context of functional clustering.
	The resulting conformal prediction set is a union of ellipses:
	\begin{equation}
	{\cal C}_k =\bigcup_{j=1}^k E(\hat\mu_j,\hat\Sigma_j,r_j)
	\end{equation}
	where
	$t_\alpha$ is the $1-\alpha$ quantile of the residuals,
	$E(a,A,b) = \{y:\ (y-a)^T A^{-1}(y-a)\leq b^2\}$ and
	$$
	r_j^2 = \left[\log \hat \pi_j + \log t_\alpha - \frac{d}{2}\log (2\pi) - \frac{1}{2}\log |\hat\Sigma_j|\right]_+.
	$$
	The clusters are the connected components
	of ${\cal C}_k$.
	Again, we see that
	the method automatically merges components that are near each other
	and $k$ can be chosen to minimizing the volume of
	${\cal C}_k$.
	
	Note that our residual
	is precisely the residual in
	(\ref{eq::resid})
	except that $k$-means replaces
	maximum likelihood
	and Euclidean distance replaces
	Mahalanobis distance.
	Thus the method based on
	(\ref{eq::resid})
	can be thought of as an approximate
	approach to conformalized mixtures of Gaussians.
	This connection can be made more formal, as we show in the next section.

	\begin{example}
		Figure \ref{fig::GMM} illustrates how the algorithm for converting
		mixture models into $k$-ellipsoids clustering works. The left
		panel shows the data generated from three Normal distributions
		with different covariance structures in addition to background
		noises. The middle panel shows how the volume of clusters
		changes as $k$ varying from $1$ to $10$. The shaded lines
		corresponds to volume curves with bootstrapped data.  The
		right panel shows clusters with $k = 4$ which is selected by
		the testing based method in the previous section. Note that
		there are only three ellipsoids even if the clustering is based on
		mixture of four Normal distributions since $r_j = 0$ for one of
		ellipsoidal clusters (blue dot in the middle). The resulting
		clusters are $1-\alpha$ predictive region with $\alpha = 0.1$.
	\end{example}

	\begin{figure}
		\begin{center}
			\includegraphics[scale=.33]{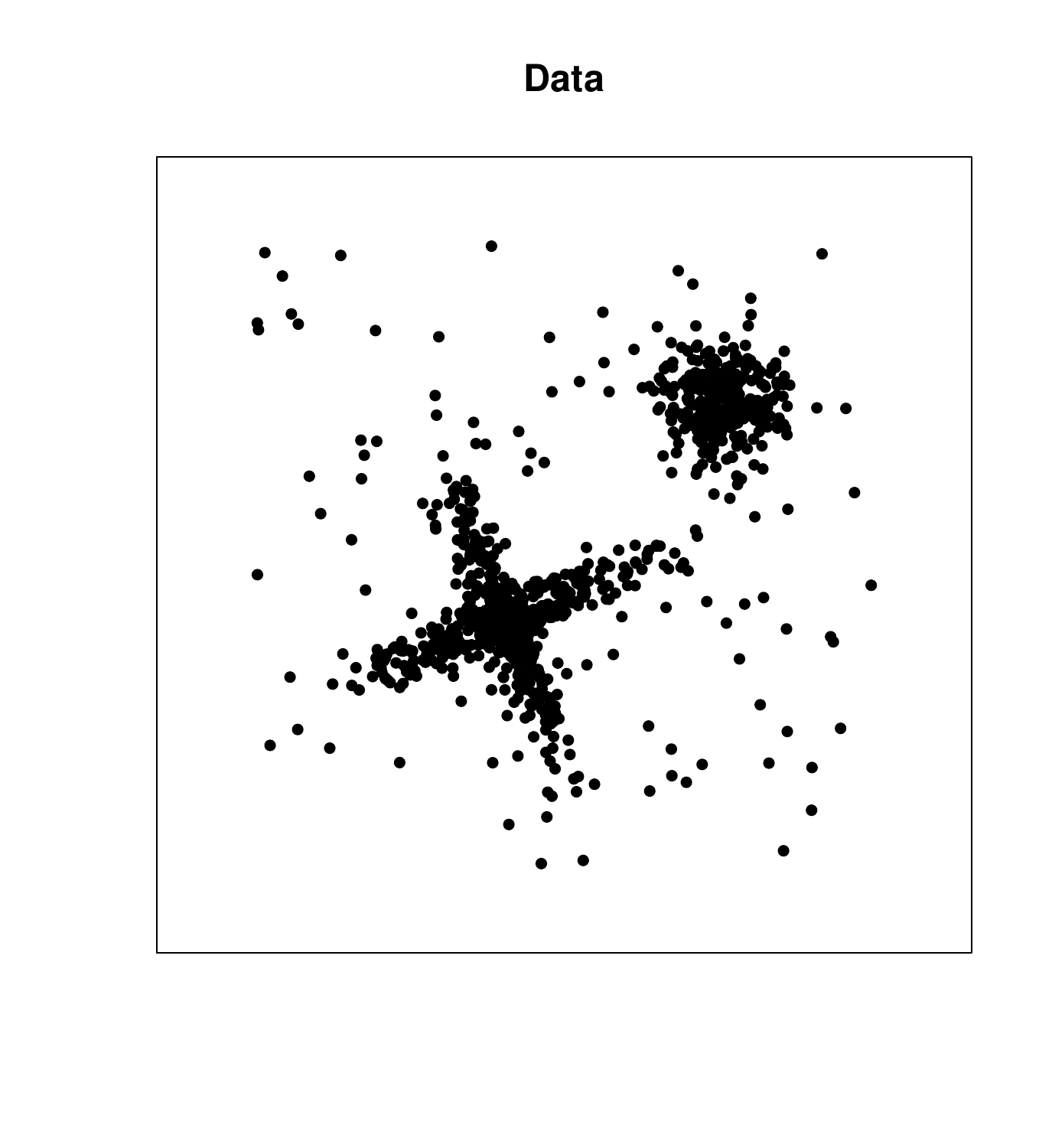}
			\includegraphics[scale=.33]{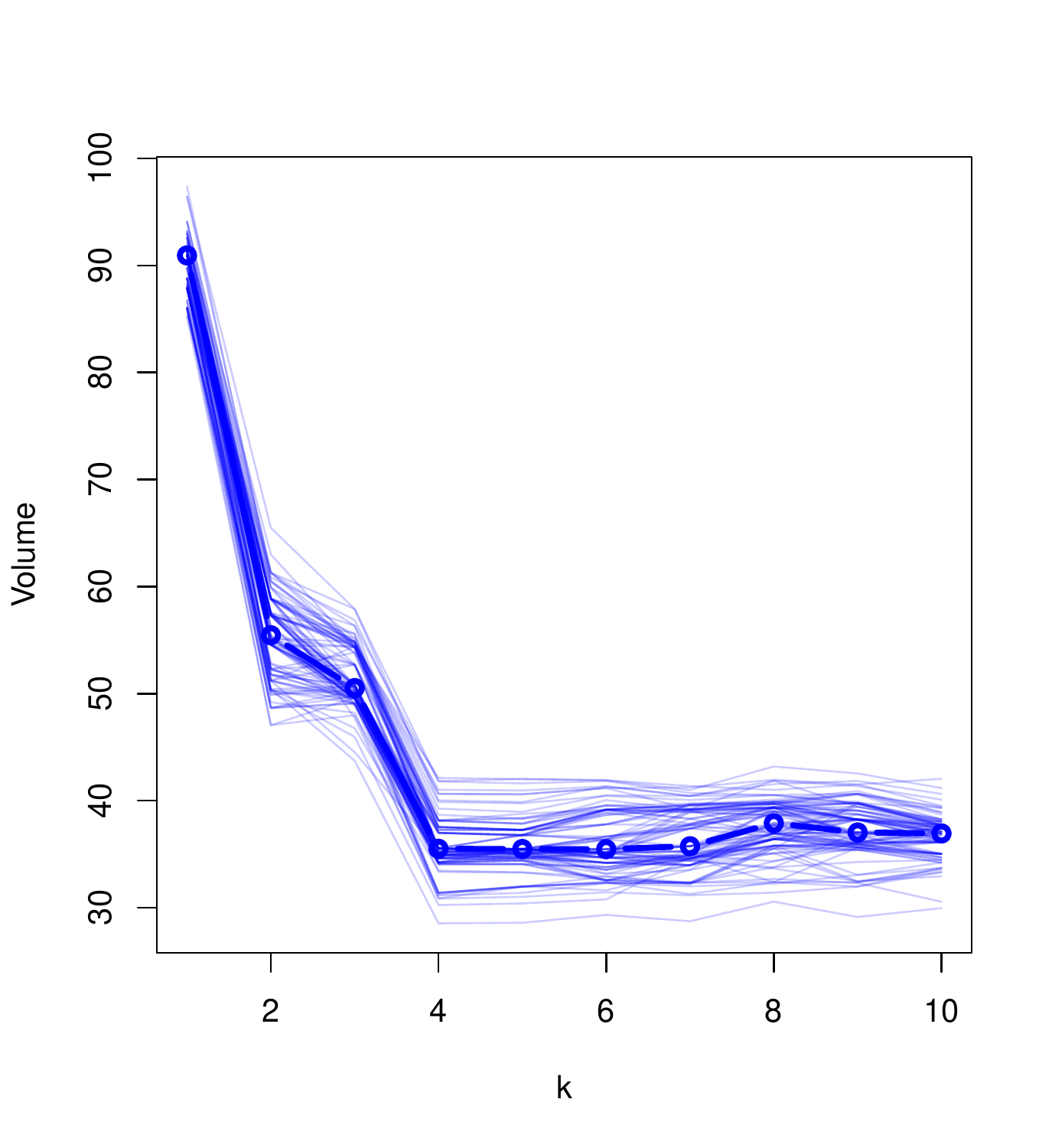}
			\includegraphics[scale=.33]{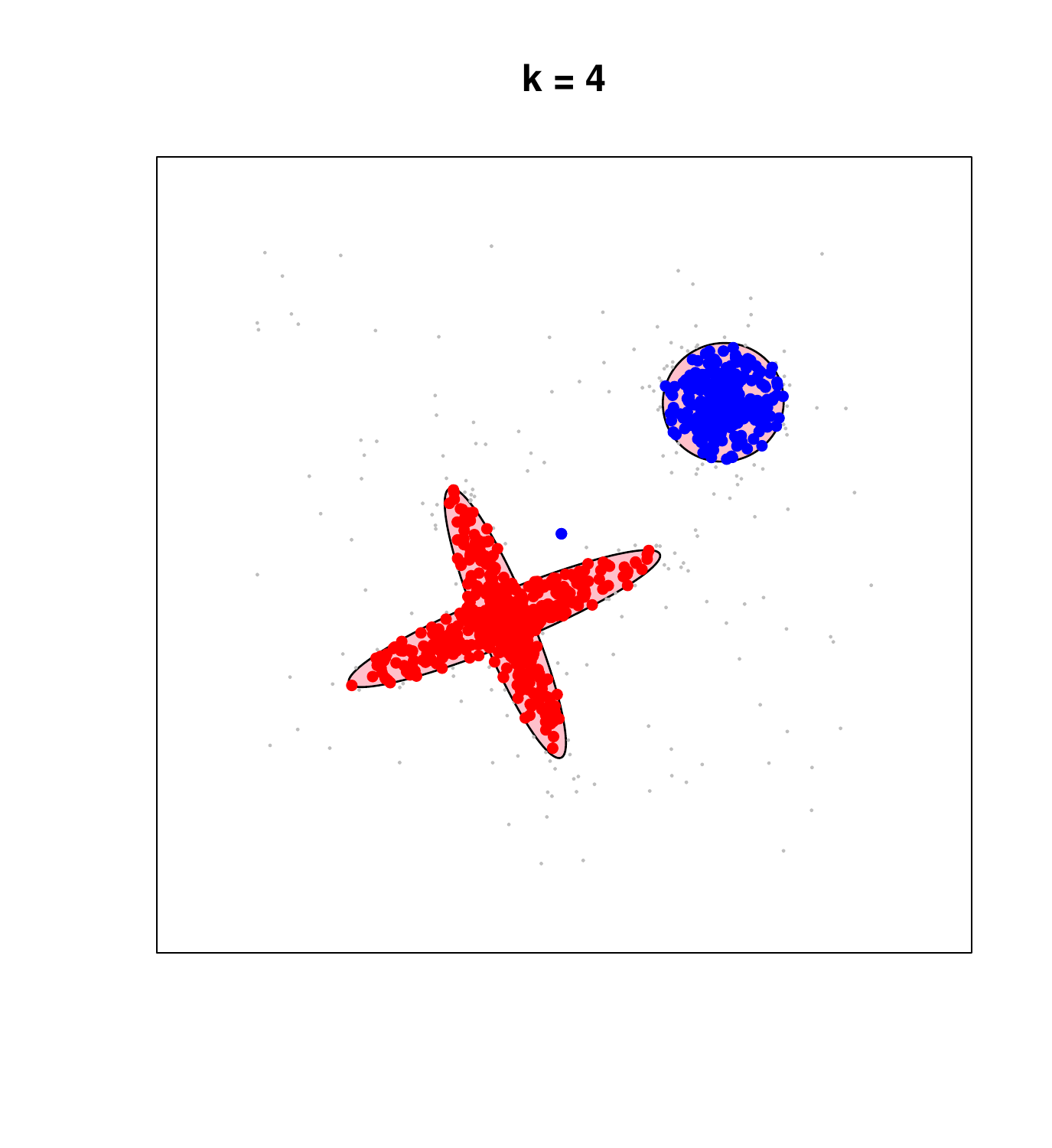}
		\end{center}
		\caption{\em Left: A two-dimensional dataset.
			Middle: $k$ versus volume of clusters.
			Right: Clusters with minimum volume.}
		\label{fig::GMM}
	\end{figure}

	\section{A Max-Mixture Model}
	\label{section::max-mix}
	
	In this section
	we link unions of spheres ---
	and more generally, unions of ellipsoids ---
	to level sets of densities
	by introducing
	a modified mixture model.
	
	Let
	$\theta = \{\pi_j, \mu_j, \Sigma_j\}_{j=1}^k$
	and define the density
	\begin{equation}
	p(y;\theta)=\frac{\max_{1\leq j \leq k} \pi_j \phi(y; \mu_j, \Sigma_j)}{Z_\theta}
	\end{equation}
	where
	$Z_\theta = \int \max_{1\leq j \leq k} \pi_j \phi(y; \mu_j, \Sigma_j)$.
	Let ${\cal F}_k$ be all such densities.
	We call this the {\em max-mixture model.}
	
	The $1-\alpha$ upper level set of $f(y; \theta) \in
	\mathcal{F}_k$ is the union of ellipsoids of the following form:
	\begin{align*}
	\tilde{C}^{(\alpha)}&:= \{y : f(y; \theta) \geq t^{(\alpha)}\}\\ 
	& =
	\bigcup_{j=1}^k \left\{y : (y-\mu_j)^T \Sigma_j^{-1} (y -\mu_j) \leq -
	2\log \det(\Sigma_j) + 2\log \pi_j + M^{(\alpha)}\right\}
	\end{align*}
	where $t^{(\alpha)}, M^{(\alpha)} \in \mathbb{R}$ 
	are constants chosen to make $P( \tilde{C}^{(\alpha)}) \geq 1-\alpha$.
	Hence, for this modified mixture family,
	upper level sets correspond exactly with
	unions of ellipsoids.
	Of course, we strict $\Sigma_j$ to be of the form $\sigma_j^2 I$ then we get
	a union of spheres.
	
	The negative log-likelihood for this model is
	\begin{equation}
	\begin{aligned}
	\ell(\theta):=&\frac{1}{n} \sum_{i=1}^n \min_j 
	\left[ \frac{1}{2} (Y_i-\mu_j)^T \Sigma_j^{-1} (Y_i -\mu_j) + \frac{1}{2}\log \det(\Sigma_j) - \log \pi_j \right] \\
	&+ \log \int \max_{j \in [k]} \pi_j \phi(y; \mu_j, \Sigma_j)(y) \mathrm{d} y.
	\end{aligned}
	\end{equation}
	Minimizing
	$\ell(\theta)$ is difficult
	because of the integral. Solutions of (Generalized) 
	k-means and maximum likelihood of Gaussian mixture problems can be interpreted as two approximate minimizers of $\ell(\theta)$.

	\subsection{(Generalized) k-means : Minimize the first term of $\ell(\theta)$}
	
	We can obtain an approximate minimizer by
	using the first part of $\ell(\theta)$ as the objective function.
	That is, we estimate the parameters by minimizing
	\begin{equation}
	\ell_{kM}(\theta) := 
	\frac{1}{n} \sum_{i=1}^n \min_j \left[ \frac{1}{2} (Y_i-\mu_j)^T \Sigma_j^{-1} (Y_i -\mu_j) + 
	\frac{1}{2}\log \det(\Sigma_j) - \log \pi_j \right].
	\end{equation}
	
	\begin{proposition} \label{prop::kmeans}
		For any $ \theta = \{\pi_j, \mu_j, \Sigma_j\}_{j=1}^k$, we have the following bound :
		\begin{equation}
		\ell_{kM}(\theta)  - \log k \leq \ell (\theta) \leq \ell_{kM}(\theta).
		\end{equation}
		Therefore, for any  $\theta^{\dagger}$ and
		$\theta^*$ 
		minimizing 
		$\ell$ and $\ell_{kM}$ we get
		\begin{align*}
		\left| \ell(\theta^\dagger)  - \ell(\theta^*)  \right| &\leq \log k,\ \ \ {\rm and}\ \ \ 
		\left| \ell_{kM}(\theta^\dagger)  - \ell_{kM}(\theta^*)  \right| \leq \log k .
		\end{align*}
	\end{proposition}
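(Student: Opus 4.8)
The plan is to reduce the entire statement to a single elementary estimate on the normalizing constant $Z_\theta = \int \max_{1\le j\le k}\pi_j\phi(y;\mu_j,\Sigma_j)\,dy$, since by construction $\ell(\theta) = \ell_{kM}(\theta) + \log Z_\theta$. With this identity, the two-sided bound $\ell_{kM}(\theta) - \log k \le \ell(\theta) \le \ell_{kM}(\theta)$ is \emph{equivalent} to showing $\tfrac1k \le Z_\theta \le 1$, so everything hinges on sandwiching $Z_\theta$.

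First I would establish the upper bound $Z_\theta \le 1$ from the pointwise inequality $\max_j \pi_j\phi(y;\mu_j,\Sigma_j) \le \sum_j \pi_j\phi(y;\mu_j,\Sigma_j)$; integrating and using $\int\phi(y;\mu_j,\Sigma_j)\,dy = 1$ together with $\sum_j\pi_j=1$ gives $Z_\theta\le \sum_j\pi_j = 1$, hence $\log Z_\theta\le 0$. For the lower bound I would use that a maximum of $k$ nonnegative terms is at least their average, $\max_j \pi_j\phi(y;\mu_j,\Sigma_j)\ge \tfrac1k\sum_j \pi_j\phi(y;\mu_j,\Sigma_j)$; integrating gives $Z_\theta \ge \tfrac1k$, hence $\log Z_\theta \ge -\log k$. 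Adding $\ell_{kM}(\theta)$ throughout yields the claimed sandwich, and note it holds for \emph{every} $\theta$.

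With the pointwise bound in hand, the second claim is a routine ``sandwiched minimizers'' argument that I would run twice, once in each direction, using only optimality of $\theta^\dagger$ for $\ell$ and of $\theta^*$ for $\ell_{kM}$. For the $\ell$-gap, optimality gives $\ell(\theta^\dagger)\le\ell(\theta^*)$, so it remains to bound $\ell(\theta^*)-\ell(\theta^\dagger)$ from above; applying the upper bound of the sandwich at $\theta^*$ and the lower bound at $\theta^\dagger$ gives $\ell(\theta^*)-\ell(\theta^\dagger)\le \ell_{kM}(\theta^*)-\ell_{kM}(\theta^\dagger)+\log k$, and since $\ell_{kM}(\theta^*)\le\ell_{kM}(\theta^\dagger)$ the leading difference is nonpositive, leaving $\le\log k$. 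The symmetric computation --- now using $\ell_{kM}(\theta^*)\le\ell_{kM}(\theta^\dagger)$ for the sign and the sandwich at $\theta^\dagger$ and $\theta^*$ --- gives $\ell_{kM}(\theta^\dagger)-\ell_{kM}(\theta^*)\le \ell(\theta^\dagger)-\ell(\theta^*)+\log k\le\log k$, where the last step uses $\ell(\theta^\dagger)\le\ell(\theta^*)$.

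I do not expect a serious obstacle: the only real content is the two-line estimate $\tfrac1k\le Z_\theta\le 1$, and the remainder is bookkeeping. The one place to be careful is keeping the four inequalities (two from the sandwich, two from optimality) pointed in consistent directions so the cross terms cancel; writing each gap as an explicit chain and checking signs at the end avoids sign errors. It is also worth recording the implicit assumption that the weights form a probability vector, $\sum_j\pi_j=1$, which is precisely what pins the upper bound at exactly $1$.
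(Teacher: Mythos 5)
Your proposal is correct and takes essentially the same route as the paper: both reduce the sandwich to the estimate $\tfrac{1}{k} \le Z_\theta \le 1$ for the normalizing constant, proved pointwise via $\max_j \pi_j\phi \ge \tfrac{1}{k}\sum_j \pi_j\phi$ and $\max_j \pi_j\phi \le \sum_j \pi_j\phi$ together with $\sum_j \pi_j = 1$. The only difference is that you write out the minimizer bookkeeping explicitly, which the paper states but leaves unproved as an immediate consequence.
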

	
	\begin{proof}
		It is enough to show that 
		$$
		-\log k \leq \log \int \max_{j \in [k]} \pi_j \phi(y; \mu_j, \Sigma_j)(y) \mathrm{d} y \leq 0
		$$
		Since each $\phi(y; \mu_j, \Sigma_j)(y)$ is a density function,  the first inequality is followed by
		\begin{align*}
		- \log k &= \log \left(\frac{1}{k} \sum_{j=1}^k \pi_j  \right) =
		\log   \left(\frac{1}{k} \sum_{j=1}^k \pi_j \int \phi(y; \mu_j, \Sigma_j)(y) \mathrm{d} y\right)  \\
		&= \log \int \frac{1}{k} \sum_{j=1}^k \pi_j \phi(y; \mu_j, \Sigma_j)(y) \mathrm{d} y \leq
		\log \int \max_{j \in [k]} \pi_j \phi(y; \mu_j, \Sigma_j)(y) \mathrm{d} y .
		\end{align*}
		The second inequality is follows since
		\begin{align*}
		\log \int \max_{j \in [k]} \pi_j \phi(y; \mu_j, \Sigma_j)(y) \mathrm{d} y 
		&\leq \log \int \sum_{j=1}^k\pi_j \phi(y; \mu_j, \Sigma_j)(y) \mathrm{d} y \\
		& = \log \sum_{j=1}^k\pi_j  \int \phi(y; \mu_j, \Sigma_j)(y) \mathrm{d} y =
		\log \sum_{j=1}^k\pi_j  =0.
		\end{align*}
	\end{proof}
	
	If we restrict to 
	$\Sigma_j \in \{\sigma^2 \mathrm{I} : \sigma \in \mathbb{R}\}$ and set
	$\pi_j = \frac{1}{k}$ for all $j = 1, \dots, k$, then minimizing $\ell$
	is the same as the ordinary $k$-means problem. Generally, Lloyd's algorithm can be directly extended to find a local optimum of the generalized $k$-means problem.
	The details are in Algorithm~\ref{alg::lloyd}.
	
	\begin{algorithm}
		\fbox{\parbox{\textwidth}{
				\begin{center}
					{\sf Generalized Lloyd's Algorithm}
				\end{center}
				\begin{enumerate}
					\item Initialize $\theta_j =  (\pi_j, \mu_j, \Sigma_j)$ for $j = 1, \dots ,k$.
					\item Set 
					\[w_{i,j} = \left\{
					\begin{array}{ll}
					1 & \text{if~~} j = \argmin_l\frac{1}{2} (Y_i-\mu_l)^T \Sigma_l^{-1} (Y_i -\mu_l) + \frac{1}{2}\log \det(\Sigma_l) - \log \pi_l \\
					0 &  \text{otherwise}
					\end{array}
					\right.
					\]
					
					\item Update $\mu_j^+ = \frac{1}{\sum_{i=1}^n w_{i,j}}\sum_{i=1}^n w_{i,j} Y_i$, for all $j = 1, \dots, k$.
					\item Update $\Sigma_j^+ = 
					\frac{1}{\sum_{i=1}^n w_{i,j}}\sum_{i=1}^n w_{i,j} (Y_i- \mu_j^+)(Y_i - \mu_j^+)^T$ , for all $j = 1, \dots, k$.
					\item Update $\pi_j^+ = \frac{1}{n} \sum_{i=1}^n w_{i,j}$, for all $j = 1, \dots, k$.
					\item Repeat step 2-5 until converge.
				\end{enumerate}
		}}
		\caption{\em The generalized version of Lloyd's algorithm for the Max-Mixture Model.}
		\label{alg::lloyd}
	\end{algorithm}

	\subsection{Gaussian Mixture: Minimize another upper and lower bounds of $\ell(\theta)$}
	The negative log-likelihood of Gaussian mixture model can be rewritten as
	\[
	\ell_{GM}(\theta) := 
	-\frac{1}{n} \sum_{i=1}^n \log \left[\sum_{j=1}^k  \exp \left\{-\frac{1}{2} (Y_i-\mu_j)^T \Sigma_j^{-1} (Y_i -\mu_j) - 
	\frac{1}{2}\log \det(\Sigma_j) + \log \pi_j\right\} \right].
	\]
	
	\begin{proposition}
		For any $ \theta = \{\pi_j, \mu_j, \Sigma_j\}_{j=1}^k$, we have the following bound :
		\begin{equation}
		\ell_{GM}(\theta)  - \log k \leq \ell (\theta) \leq \ell_{GM}(\theta) + \log k.
		\end{equation}
		Therefore, for any  $\theta^{\dagger}$ and
		$\theta^*$ 
		minimizing 
		$\ell$ and $\ell_{GM}$ we get
		\begin{align*}
		\left| \ell(\theta^\dagger)  - \ell(\theta^*)  \right| &\leq 2\log k,\ \ \ {\rm and}\ \ \ 
		\left| \ell_{GM}(\theta^\dagger)  - \ell_{GM}(\theta^*)  \right| \leq 2\log k .
		\end{align*}
	\end{proposition}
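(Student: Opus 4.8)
The plan is to reduce the proposition to the normalizing-constant bound already established for Proposition~\ref{prop::kmeans} together with a single elementary inequality comparing a maximum to a sum. First I would put all three functionals on a common footing. Set $a_{ij} := (2\pi)^{d/2}\,\pi_j\,\phi(Y_i;\mu_j,\Sigma_j)$; then $-\log a_{ij}$ is \emph{exactly} the $j$-th bracketed term in $\ell_{kM}$, because the constant $(2\pi)^{d/2}$ baked into $a_{ij}$ cancels the $\tfrac{d}{2}\log(2\pi)$ coming from $-\log\phi$. Consequently $\ell_{kM}(\theta) = -\tfrac1n\sum_i \log \max_j a_{ij}$ (since $\min_j(-\log a_{ij}) = -\log\max_j a_{ij}$) and $\ell_{GM}(\theta) = -\tfrac1n\sum_i \log \sum_j a_{ij}$ (since the exponent inside $\ell_{GM}$ equals $\log a_{ij}$, and $\sum_j a_{ij} = (2\pi)^{d/2}p_k(Y_i)$). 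Confirming that all three objectives are written with consistent additive constants is the only genuinely fiddly bookkeeping, and I would dispatch it at the outset.

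Second, I would prove the clean two-sided comparison $\ell_{kM}(\theta) - \log k \le \ell_{GM}(\theta) \le \ell_{kM}(\theta)$. This is immediate from the pointwise sandwich $\max_j a_{ij} \le \sum_j a_{ij} \le k \max_j a_{ij}$, valid since all $a_{ij}\ge 0$: applying the decreasing map $-\log$ and averaging over $i$ turns the multiplicative factor $k$ into an additive $\log k$. I would record both the inequality $\ell_{GM} \le \ell_{kM}$ and the tail $\ell_{kM} \le \ell_{GM} + \log k$ for the next step.

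Third, I would combine with Proposition~\ref{prop::kmeans}. Since $\ell(\theta) = \ell_{kM}(\theta) + \log Z_\theta$ and the proof of that proposition already shows $-\log k \le \log Z_\theta \le 0$, I write $\ell_{GM}(\theta) - \ell(\theta) = (\ell_{GM}(\theta) - \ell_{kM}(\theta)) - \log Z_\theta$. The first parenthesis lies in $[-\log k,\,0]$ by Step~2 and $-\log Z_\theta$ lies in $[0,\,\log k]$ by Step~1, so their sum lies in $[-\log k,\,\log k]$; rearranging gives exactly $\ell_{GM}(\theta) - \log k \le \ell(\theta) \le \ell_{GM}(\theta) + \log k$.

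Finally, the two statements about minimizers follow from a generic argument: whenever $|f-g|\le c$ pointwise and $\theta^\dagger,\theta^*$ minimize $f,g$ respectively, the chain $f(\theta^\dagger)\le f(\theta^*)\le g(\theta^*)+c\le g(\theta^\dagger)+c\le f(\theta^\dagger)+2c$ yields $|f(\theta^\dagger)-f(\theta^*)|\le 2c$, and symmetrically for $g$; here $c=\log k$. I do not anticipate a real obstacle. The one subtlety worth flagging is that the max-versus-sum gap in Step~2 is genuinely two-sided, unlike the one-sided bound $-\log k\le\log Z_\theta\le 0$, and it is precisely this two-sidedness that doubles the constant from $\log k$ in Proposition~\ref{prop::kmeans} to $2\log k$ here.
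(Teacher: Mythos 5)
Your proof is correct and takes essentially the same route as the paper's: both rest on the max-versus-sum comparison giving $\ell_{GM}(\theta) \leq \ell_{kM}(\theta) \leq \ell_{GM}(\theta) + \log k$, combined with the bound $-\log k \leq \log Z_\theta \leq 0$ from the proof of Proposition~\ref{prop::kmeans}; your $a_{ij}$ normalization and the decomposition of $\ell_{GM}(\theta)-\ell(\theta)$ are just a cleaner arrangement of the same chaining. As a minor bonus, you make explicit the generic minimizer argument (and correctly observe that the two-sidedness of the bound is what doubles the constant to $2\log k$), a step the paper leaves implicit.
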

	
	\begin{proof}
		From Proposition \ref{prop::kmeans}, we know that
		$$
		\ell_{kM}(\theta) - \log k \leq \ell(\theta) \leq \ell_{kM}(\theta).
		$$
		Therefore, it is enough to show that
		\[
		\ell_{GM}(\theta)  \leq \ell_{kM}(\theta) \leq \ell_{GM}(\theta) + \log k
		\]
		
		Note that
		\begin{align*}
		\ell_{kM}(\theta) &= \frac{1}{n} \sum_{i=1}^n \min_j \left[ \frac{1}{2} (Y_i-\mu_j)^T \Sigma_j^{-1} (Y_i -\mu_j) + 
		\frac{1}{2}\log \det(\Sigma_j) - \log \pi_j \right] \\
		& = \frac{1}{n} \sum_{i=1}^n \log \left[\exp \left\{\min_{j \in [k]}  \frac{1}{2} (Y_i-\mu_j)^T \Sigma_j^{-1} (Y_i -\mu_j) + 
		\frac{1}{2}\log \det(\Sigma_j) - \log \pi_j\right\} \right]\\
		& = -\frac{1}{n} \sum_{i=1}^n \log \left[\max_{j \in [k]}  \exp \left\{-\frac{1}{2} (Y_i-\mu_j)^T \Sigma_j^{-1} (Y_i -\mu_j) - 
		\frac{1}{2}\log \det(\Sigma_j) +\log \pi_j\right\} \right]
		\end{align*}
		Hence,  the first inequality comes from
		\begin{align*}
		\ell_{kM}(\theta) &= -\frac{1}{n} \sum_{i=1}^n \log \left[\max_{j \in [k]}  \exp \left\{-\frac{1}{2} (Y_i-\mu_j)^T \Sigma_j^{-1} (Y_i -\mu_j) - 
		\frac{1}{2}\log \det(\Sigma_j) +\log \pi_j\right\} \right]\\
		&\geq -\frac{1}{n} \sum_{i=1}^n \log \left[\sum_{j=1}^k \exp \left\{-\frac{1}{2} (Y_i-\mu_j)^T \Sigma_j^{-1} (Y_i -\mu_j) - 
		\frac{1}{2}\log \det(\Sigma_j) + \log \pi_j\right\} \right] \\
		& = \ell_{GM}(\theta)
		\end{align*}
		And the second inequality is followed by
		\begin{align*}
		\ell_{kM}(\theta) &= -\frac{1}{n} \sum_{i=1}^n \log \left[\max_{j \in [k]}  \exp \left\{-\frac{1}{2} (Y_i-\mu_j)^T \Sigma_j^{-1} (Y_i -\mu_j) - 
		\frac{1}{2}\log \det(\Sigma_j) +\log \pi_j\right\} \right]\\
		&\leq -\frac{1}{n} \sum_{i=1}^n \log \left[\frac{1}{k}\sum_{j=1}^k \exp \left\{-\frac{1}{2} (Y_i-\mu_j)^T \Sigma_j^{-1} (Y_i -\mu_j) - 
		\frac{1}{2}\log \det(\Sigma_j) + \log \pi_j\right\} \right] \\
		& = \ell_{GM}(\theta) + \log k
		\end{align*}
	\end{proof}
	
	The standard EM algorithm can be used to get a local minimum of $\ell_{GM}(\theta)$.
	
	\subsection{Union of ellipsoids clustering - General algorithm.}
	
	To get $1-\alpha$ upper level set of $p(y; \theta)$, we can directly
	use the split conformal prediction method with the residual function
	$-p(y; \theta)$. However, to evaluate $-p(y;\theta)$, we need to
	calculate the normalizing constant$ \int \max_{j \in [k]} \pi_j
	\phi(y; \mu_j, \Sigma_j)(y) $. Since the constant does not depend on
	$y$, we use an equivalent residual function defined by
	\begin{equation}
	\tilde{p}(y ; \theta) = 
	\min_{j \in [k]}  \left[ (y-\mu_j)^T \Sigma_j^{-1} (y -\mu_j) + \log \det(\Sigma_j) - 2\log \pi_j \right].
	\end{equation} 
	This is equivalent to residual functions defined in \eqref{eq::resid} and \eqref{eq::resid_GMM} except how 
	we estimate the parameters. 
	The general split conformal steps are in Algorithm~\ref{alg::ellipse}.
	
	\begin{algorithm}
		\fbox{\parbox{\textwidth}{
				\begin{center}
					{\sf Unions of Ellipses}
				\end{center}
				
				\begin{enumerate}
					\item Split the data into two halves
					${\cal Y}_1$ and 
					${\cal Y}_2$.
					\item Estimate $\hat{\theta}$ from ${\cal Y}_1$ by solving generalized k-means or maximum likelihood problem of GMM.  
					\item Compute the residuals
					$$
					R_i = \tilde{f}(Y_i ; \hat{\theta}) = 
					\min_{j \in [k]}  
					\left[ (Y_i-\hat{\mu}_j)^T \hat{\Sigma}_j^{-1} (Y_i -\hat{\mu}_j) + 
					\log \det(\hat{\Sigma}_j) - 2\log \hat{\pi}_j \right]
					$$
					where
					$Y_i \in {\cal Y}_2$.
					\item Let $M^{(\alpha)}$ be the $1-\alpha$ quantile of the residuals.
					\item Let 
					\begin{align}
					\tilde{C}^{(\alpha)}&= \{y : \tilde{f}(y; \hat{\theta}) \leq M^{(\alpha)}\}\\
					& = 
					\bigcup_{j=1}^k \left\{y : (y-\mu_j)^T \Sigma_j^{-1} (y -\mu_j) \leq - 2\log \det(\Sigma_j) + 
					2\log \pi_j + M^{(\alpha)}\right\}.
					\end{align}
				\end{enumerate}
		}}
		\caption{Obtaining a union of ellipses from the generalized Lloyd's algorithm.}
		\label{alg::ellipse}
	\end{algorithm}
	
	We now see that
	$\tilde{C}^{(\alpha)}$ is a union of ellipsoids with $1-\alpha$ coverage, that is, 
	$$
	\mathbb{P}(Y_{n+1} \in \tilde{C}^{(\alpha)}) = P(\tilde{C}^{(\alpha)}) \geq 1-\alpha
	$$
	Unlike common mixture model based clustering, our method defines
	clusters as connected components of a level set which make it possible
	to apply our method to general shaped clusters.

	Figure \ref{fig::max-mixture} shows an example.
	Panel (a) shows the data with 4 crescent-shaped clusters.  Panel (b -
	d) show the resulting clusters in which parameter estimations are
	based on $k$-means, generalized $k$-means, and a Gaussian mixture
	model, respectively. Each union of balls/ellipsoids is a $1-\alpha$
	predictive region of the underlying distribution with $\alpha =
	0.1$. Although some of the methods yield small erroneous clusters, all
	method recover the crescent-shaped areas reasonably.  
	
	\begin{figure}
		\centering
		\centering
		\begin{subfigure}[b]{0.475\textwidth}
			\centering
			\includegraphics[width=\textwidth]{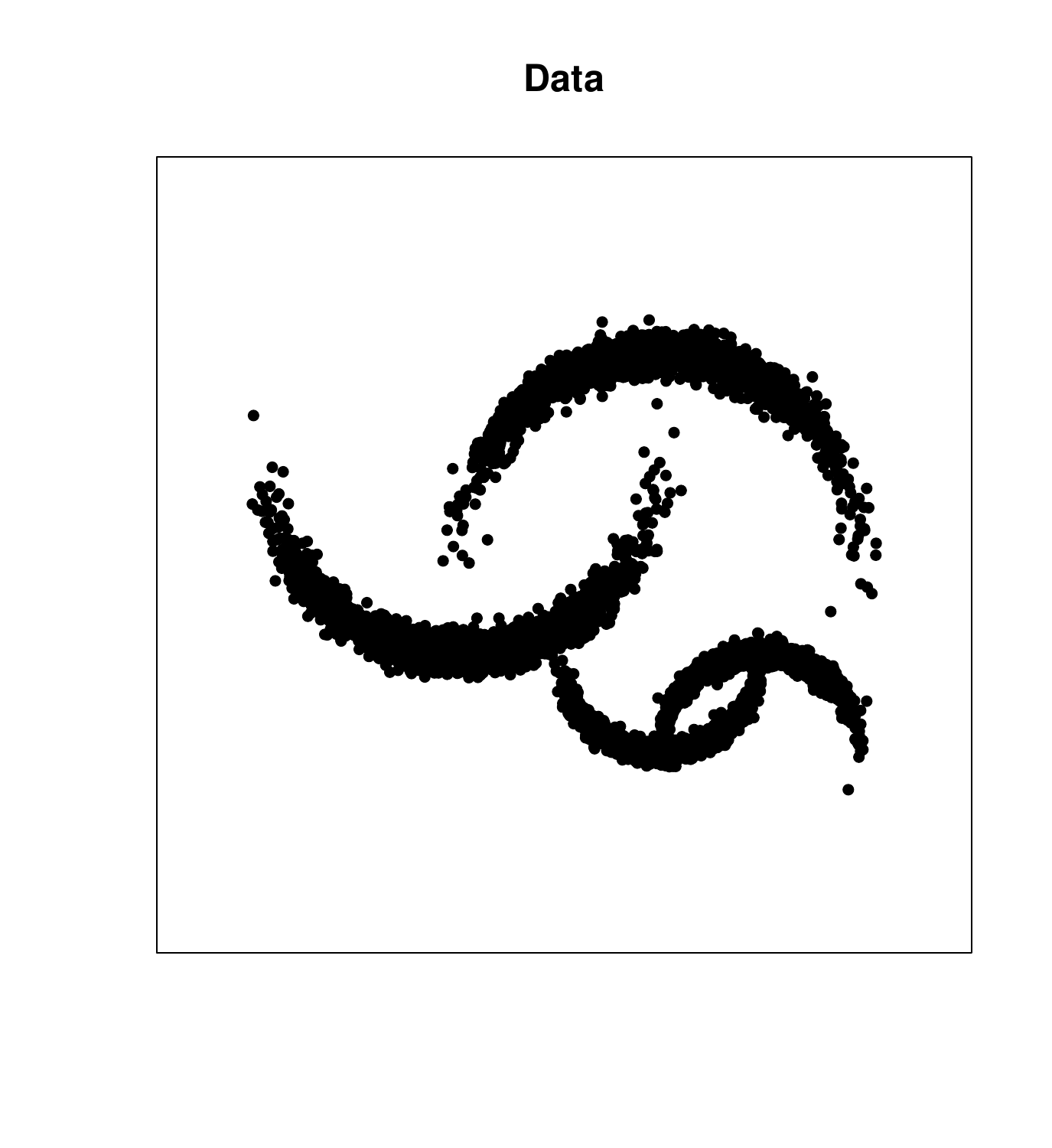}
		\end{subfigure}
		\hfill
		\begin{subfigure}[b]{0.475\textwidth}  
			\centering 
			\includegraphics[width=\textwidth]{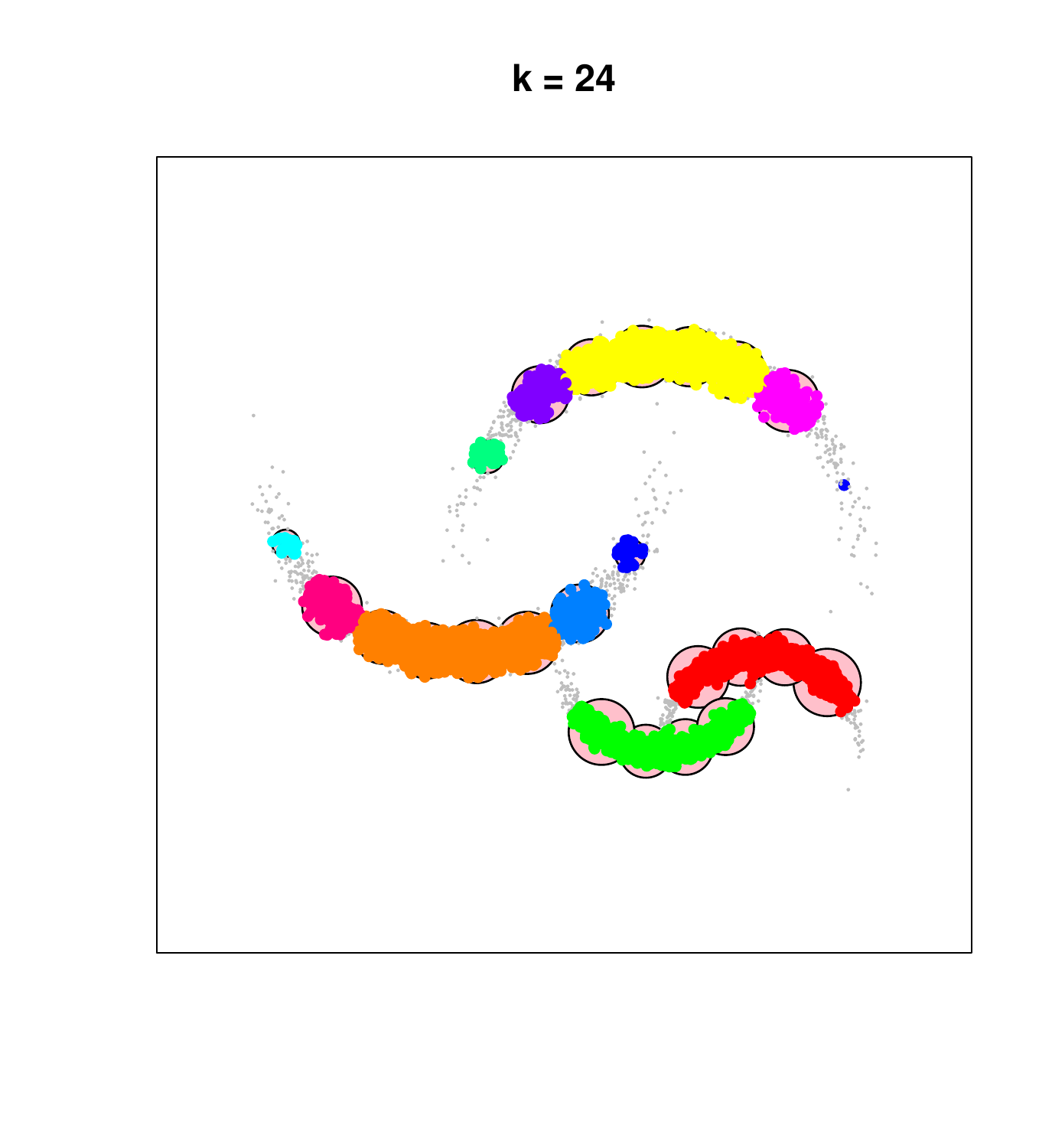}
		\end{subfigure}
		\vskip\baselineskip
		\begin{subfigure}[b]{0.475\textwidth}   
			\centering 
			\includegraphics[width=\textwidth]{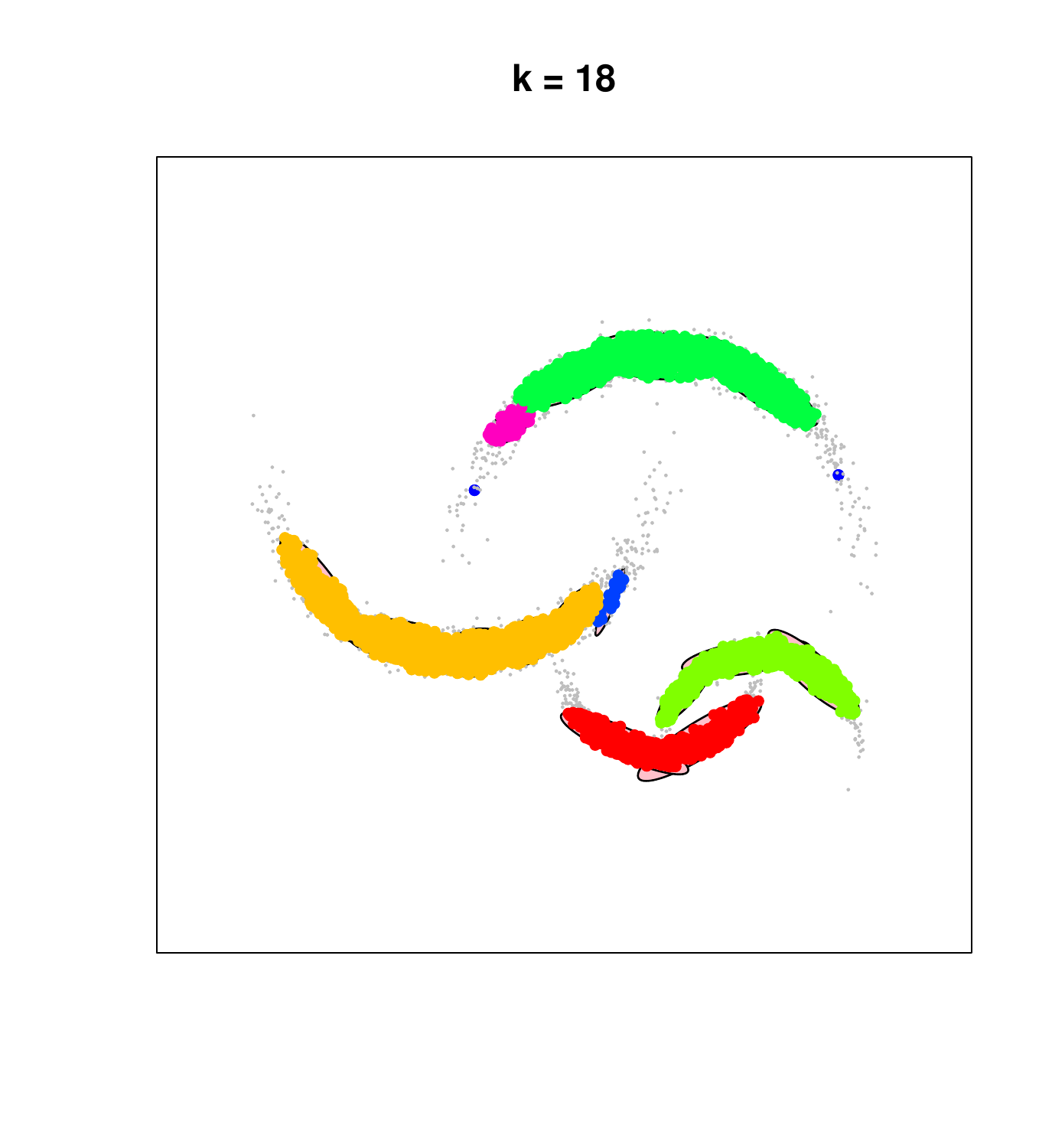}
		\end{subfigure}
		\quad
		\begin{subfigure}[b]{0.475\textwidth}   
			\centering 
			\includegraphics[width=\textwidth]{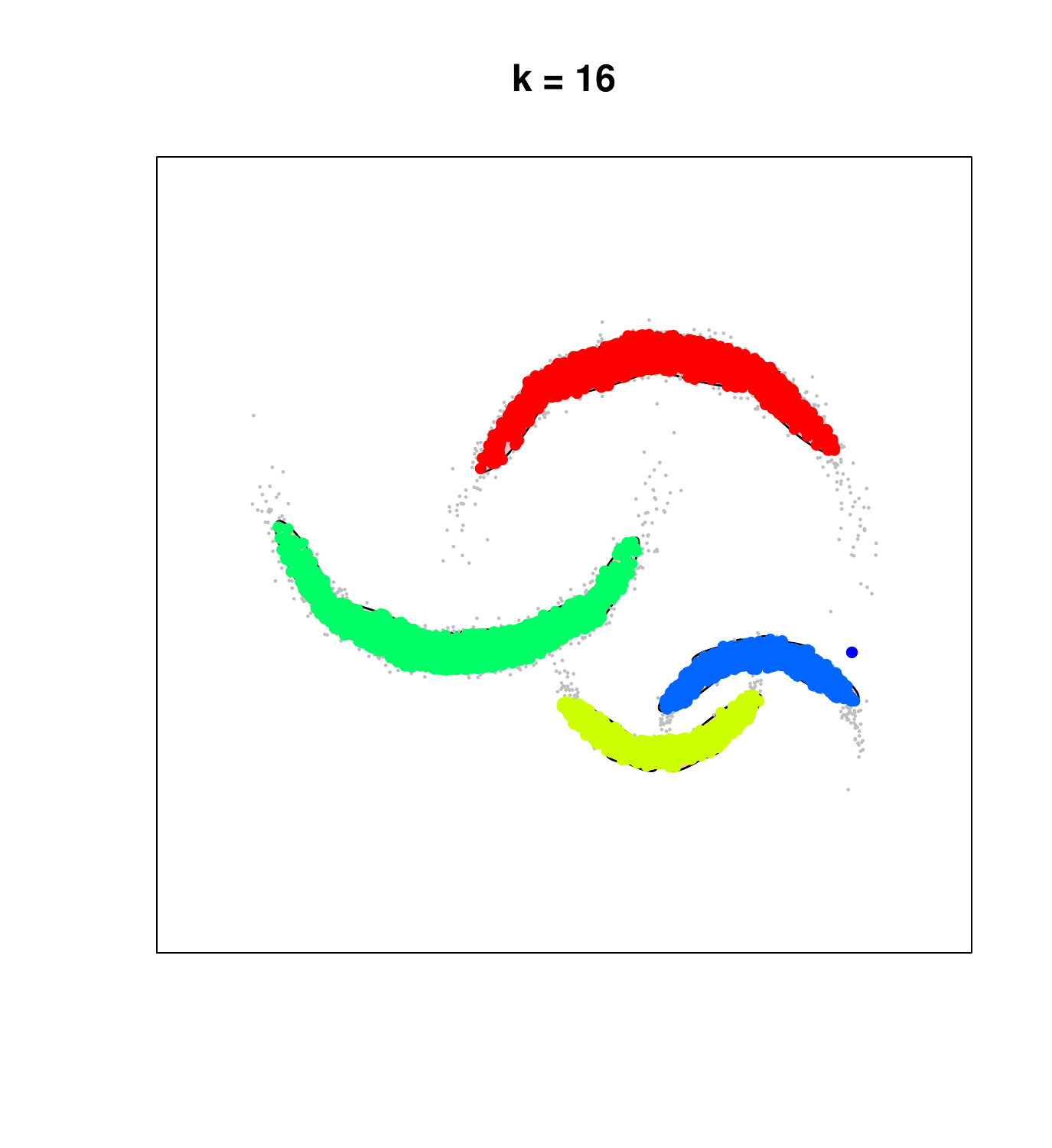}
		\end{subfigure}
		\caption{\em (a) A two-dimensional dataset.
			(b) k-means.
			(c) Generalized k-means.
			(d)  GMM.}
		\label{fig::max-mixture}
	\end{figure}

	\section{Converting Level Sets to Unions of Spheres}
	\label{section::level-sets}
	
	Another approach to clustering
	is to use the upper level sets
	of density estimates.
	These can easily be converted into unions of spheres.
	We focus on $k$-nearest neighbor density estimation.

	\subsection{Density Functions}
	
	Let
	$L=\{y:\ \hat p(y) >t\}$
	be the upper level set of a nonparametric density estimator $\hat p$.
	The set $L$
	is often used for clustering;
	see \cite{cuevas2001cluster}, for example.
	In general, $L$
	is a very complex set and is hard to compute.
	In general, there is no simple way to represent this set.
	Some authors have suggested approximating
	$L$ with a union of spheres.
	In this section, we show that the conformalization approach
	automatically turns $L$ into a union of spheres and that, as with the other approaches,
	minimizing volume allows us to choose all tuning parameters.
	The details are in Algorithm~\ref{alg::level}.
	
	\begin{algorithm}
		\fbox{\parbox{\textwidth}{
				\begin{center}
					{\sf Density Level Set Conversion}
				\end{center}
				\begin{enumerate}
					\item Split the data into two halves
					${\cal Y}_1$ and ${\cal Y}_2$.
					\item Estimate a density $\hat{p}$ from ${\cal Y}_1$.
					\item For a given $t > 0$, let ${\cal Y}_1^{(t)} := \{Y_i \in {\cal Y}_1 : \hat{p}(Y_i) \geq t\}$.
					\item Compute the residuals $R_i := R(Y_i)$ for $Y_i \in {\cal Y}_2$, where
					$R(y)= \min_{Y_j \in {\cal Y}_1^{(t)}}  \|y - Y_j\|$.
					\item Let $M^{(\alpha)}$ be the $1-\alpha$ quantile of the residuals.
					\item Let 
					$\hat{C}^{(\alpha)}= \{y : R(y) \leq M^{(\alpha)}\}=
					\bigcup_{Y_j \in {\cal Y}_1^{(t)}} \left\{y : \|y - Y_j\| \leq M^{(\alpha)}.\right\}$
				\end{enumerate}
		}}
		\caption{Algorithm to convert density level set into a union of spheres.}
		\label{alg::level}
	\end{algorithm}
	
	Recall that the $k$-nn density estimate is
	$\hat p(x) = k/(n d_k(x))$
	where $d_k(x)$ is the distance from $x$ to its
	$k^{\rm th}$ nearest-neighbor.
	The resulting set $\hat{C}^{(\alpha)}$ is a union of balls with $1-\alpha$ coverage, that is, 
	$$
	P^{n+1}(Y_{n+1} \in \hat{C}^{(\alpha)}) = P(\hat{C}^{(\alpha)}) \geq 1-\alpha.
	$$
	We can minimize the Lebesgue measure to choose both tuning parameters
	$t$ and $k$.

	\begin{example}
		Figure \ref{fig::KNN} illustrates the results of Algorithm~\ref{alg::level}.
		We use the
		same data with the crescent-shaped clusters as in the
		previous example. Panel (a) shows how the volume
		of the clusters changes as the parameter $k$ varying from
		$2^1$ to $2^8$ with the level $t$ fixed to the
		$1-\alpha$ quantile of density values from the first
		half data $\mathcal{Y}_1$. The shaded lines
		corresponds to volume curves with bootstrapped data.
		Panel (b) shows clusters with $k = 32$ which is
		selected by the testing based method. Panel (c)
		illustrates how the volume of the clusters changes as the
		level parameter $t$ varying from $1-\alpha / 2$ to $1-
		2\alpha$ quantiles of the density values of the first half
		data $\mathcal{Y}_1$ with fixed $k = 32$. The shaded
		lines corresponds to volume curves with bootstrapped
		data. Panel (d) shows clusters with $k = 32$, $t$
		equal to the $1-0.112$ quantile of density values on the
		first half data $\mathcal{Y}_1$.
	\end{example}

	\begin{figure}
		\centering
		\centering
		\begin{subfigure}[b]{0.475\textwidth}
			\centering
			\includegraphics[width=\textwidth]{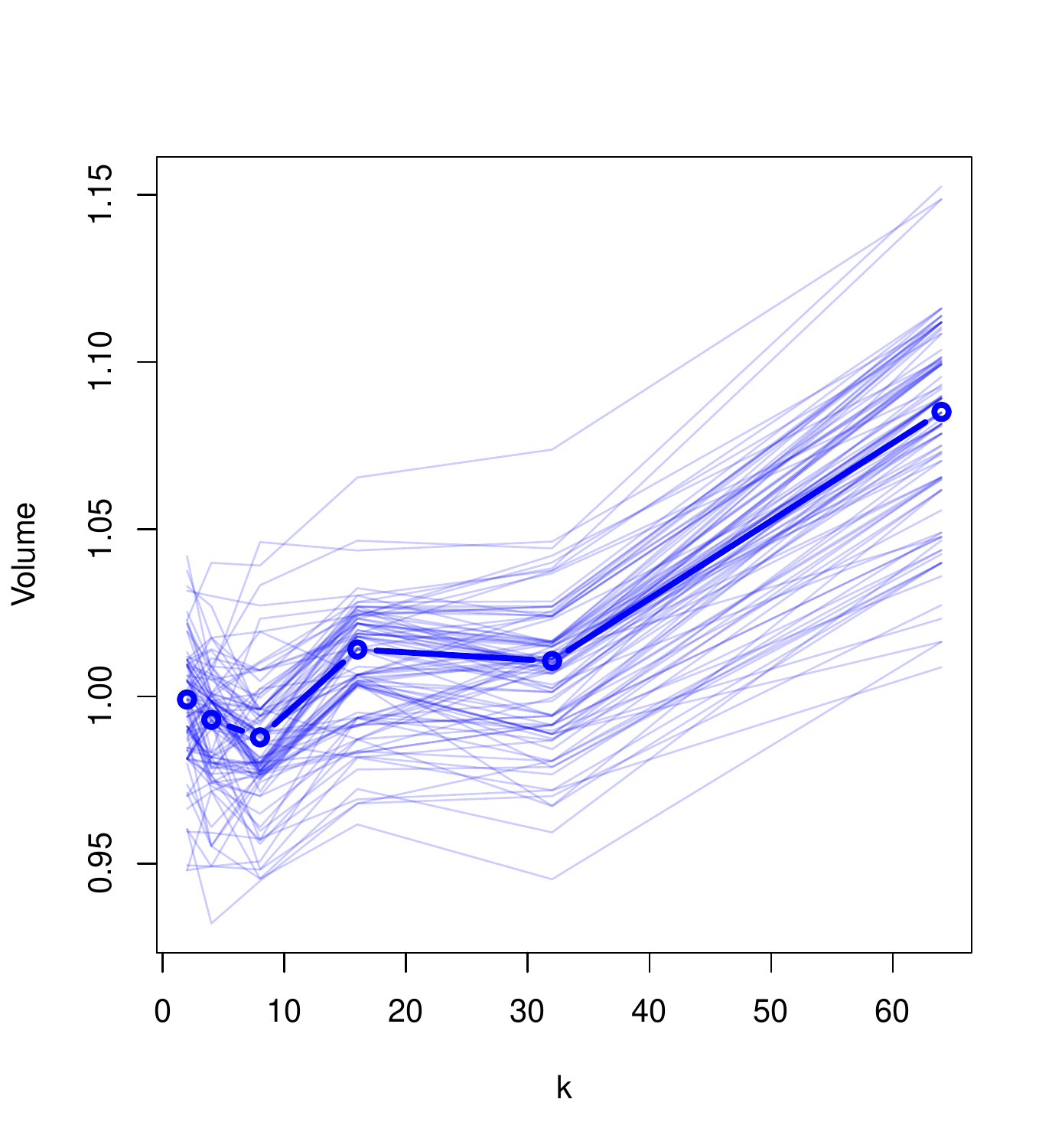}
		\end{subfigure}
		\hfill
		\begin{subfigure}[b]{0.475\textwidth}  
			\centering 
			\includegraphics[width=\textwidth]{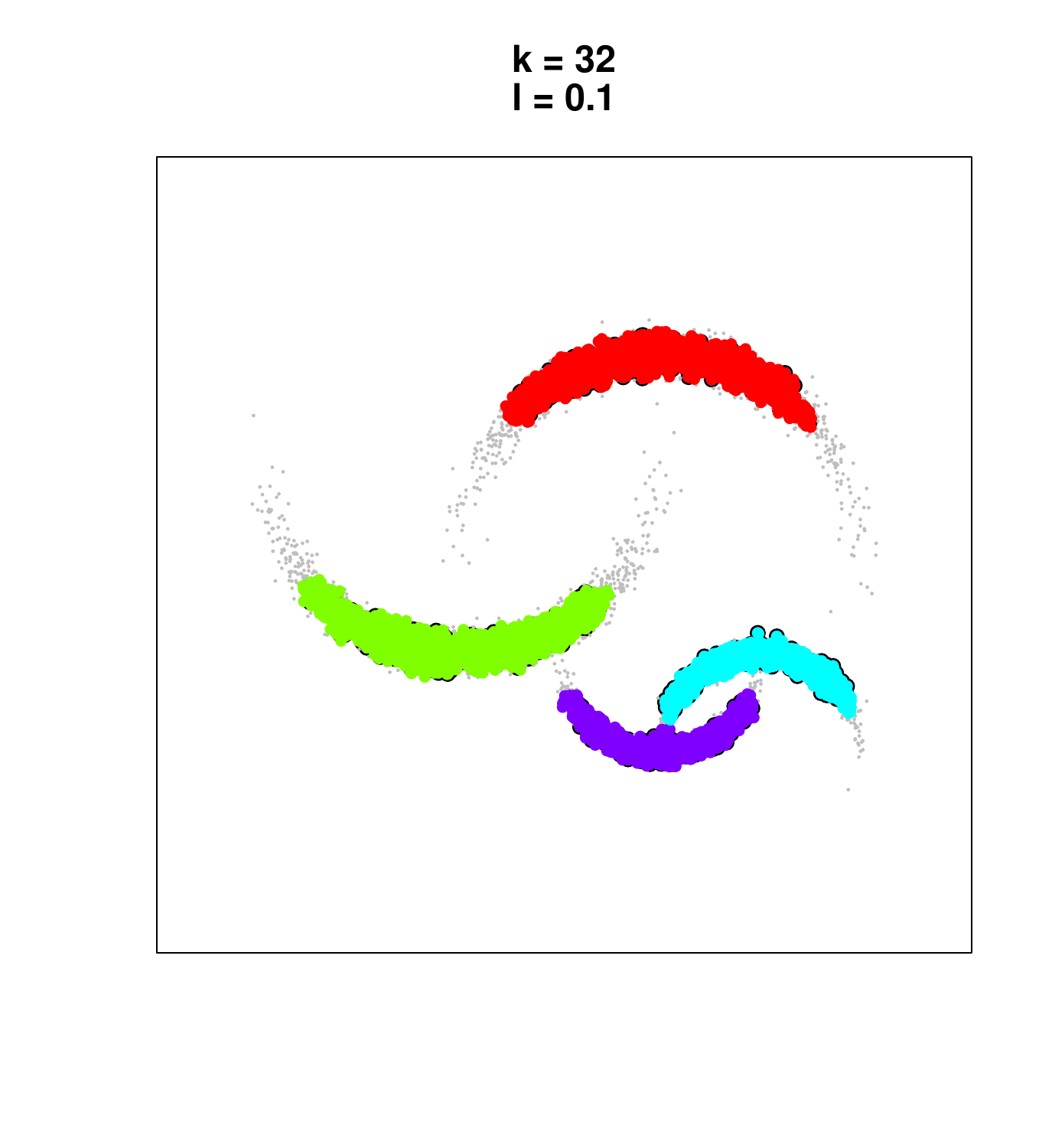}
		\end{subfigure}
		\vskip\baselineskip
		\begin{subfigure}[b]{0.475\textwidth}   
			\centering 
			\includegraphics[width=\textwidth]{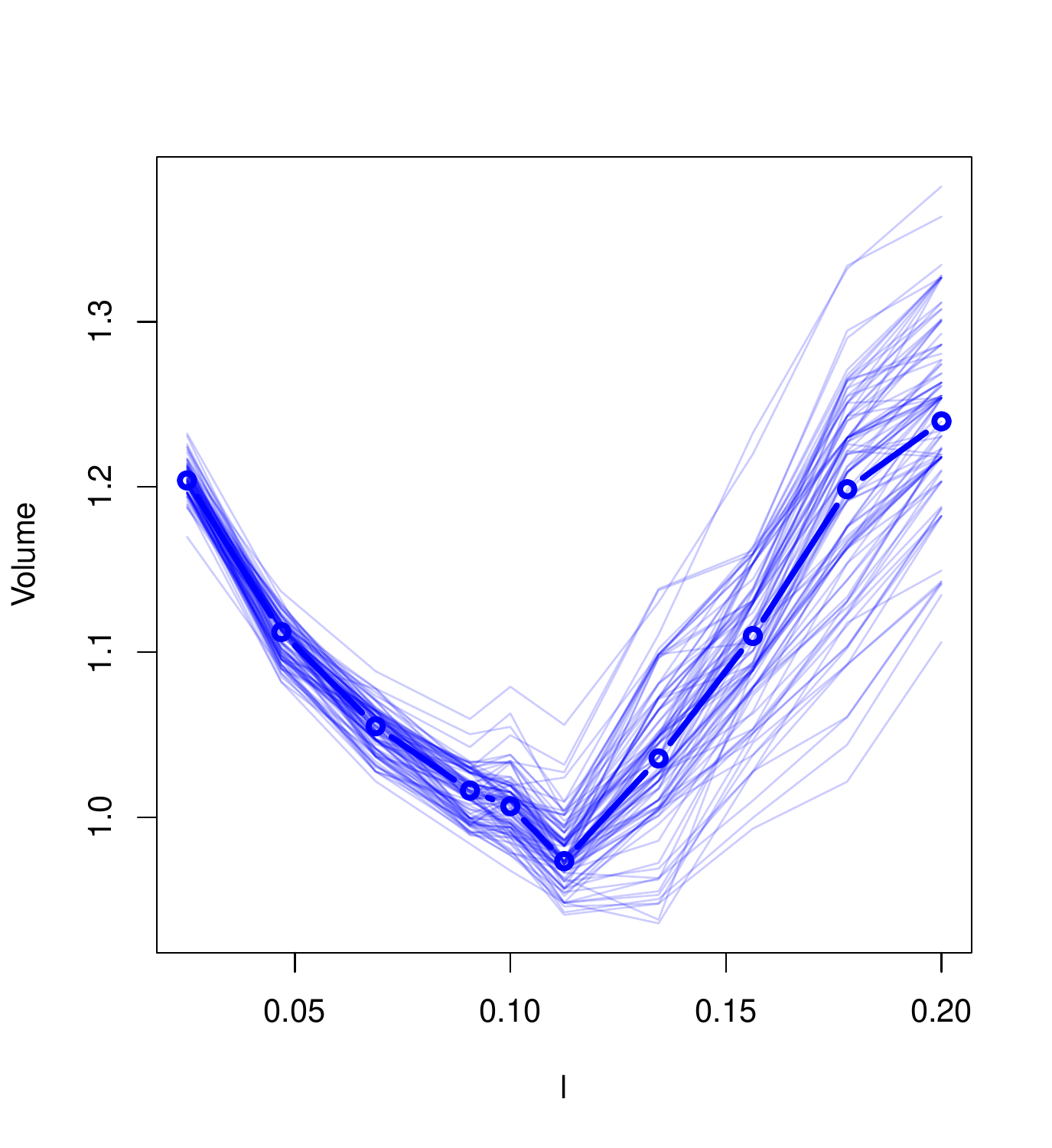}
		\end{subfigure}
		\quad
		\begin{subfigure}[b]{0.475\textwidth}   
			\centering 
			\includegraphics[width=\textwidth]{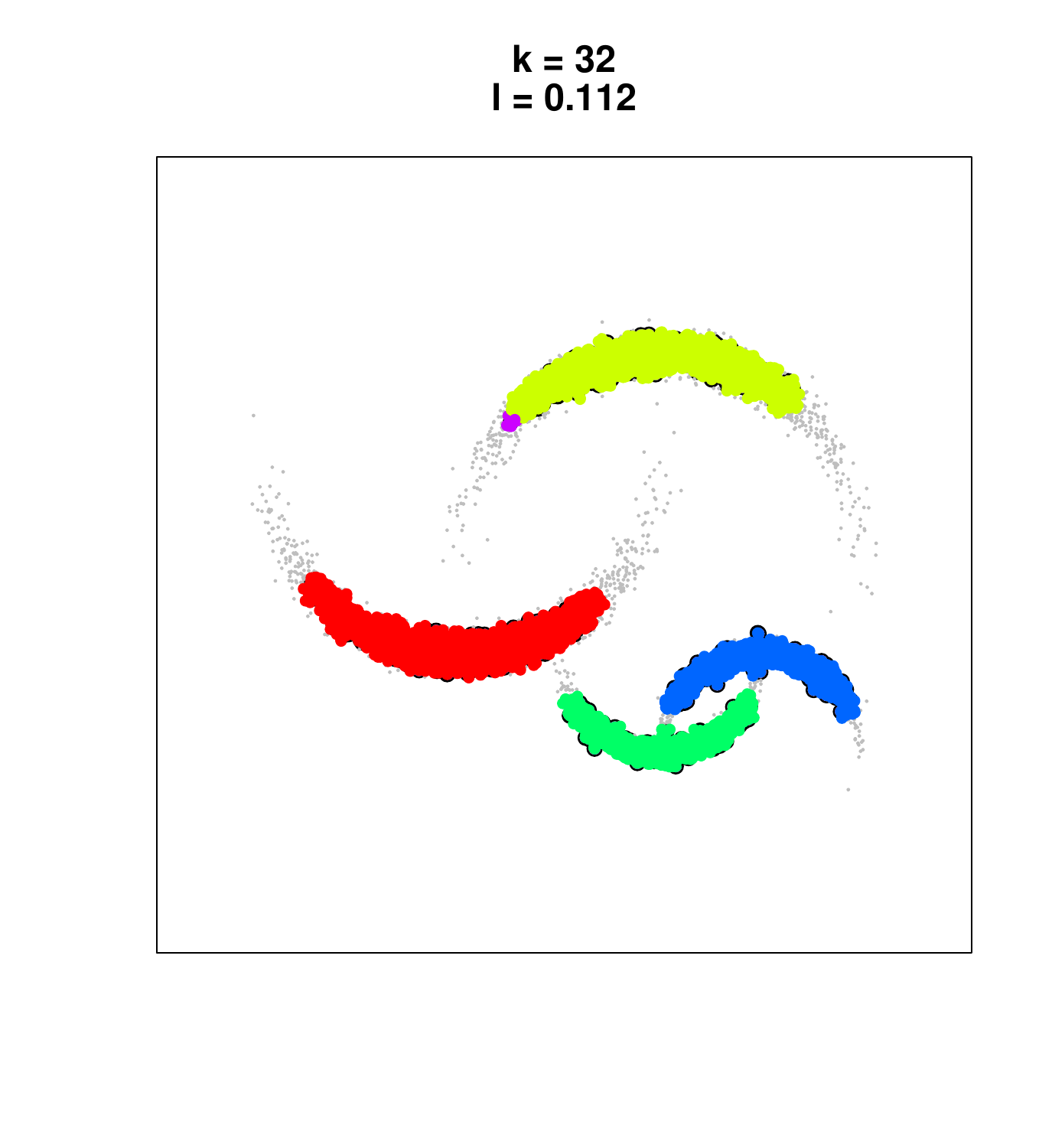}
		\end{subfigure}
		\caption{\em (a) $k$ versus volume curve with fixed level $t >0$.
			(b) k-NN level set clusters with $\hat{k}$.
			(c)  $t$ versus volume curve with fixed parameter $k = \hat{k}$.
			(d) k-NN level set clusters with $\hat{k}$ and $\hat{t}.$}
		\label{fig::KNN}
	\end{figure}

	{\bf Remark:}
	{\em As with $k$-means we may want to use spheres of varying size.  We can
		do this by introducing an adaptive residual function such as $R(y) =
		\min_{Y_j \in {\cal Y}_1^{(t)}} \|y - Y_j\|\sqrt{\hat{p}(Y_j)}$ for
		example.  Then,
		$$
		\hat{C}^{(\alpha)}= \{y : R(y) \leq M^{(\alpha)}\}=
		\bigcup_{Y_j \in {\cal Y}_1^{(t)}} \left\{y : \|y - Y_j\| \leq \frac{ M^{(\alpha)} }{ \sqrt{\hat{p}(Y_j)} } \right\}.
		$$
	}

	\section{Conclusion and Future Work}
	
	In this paper
	we showed that we can construct distribution-free
	prediction sets from clustering.
	These prediction sets can be regarded as an improved clustering.
	We can choose the tuning parameters
	in the clustering to minimize the volume of the prediction set.
	
	There are some natural extensions
	of this approach.
	First, it would be interesting to create a streaming version of the method.
	In fact, the conformal prediction was originally presented as a sequential method.
	It may be possible, therefore, to incorporate streaming clustering as well.
	Second, there have been some papers
	on clustering in the regression framework
	\cite{loubes2017prediction,chen2018modal}.
	In these models,
	we let the clustering structure of $Y$ vary with $x$.
	We could adapt our methods to produce
	cluster-structured prediction sets.
	Such sets can be much smaller than
	standard regression prediction sets
	if there is a clustering structure in the data.

	\bibliographystyle{plainnat}
	\bibliography{paper}

\end{document}